\newtheorem{proposition}{Proposition}
\newlength{\halfpagewidth}
\newtheorem{theorem}{\textbf{Theorem}}
\newtheorem{lemma}{\textbf{Lemma}}
\newtheorem{corollary}{\textbf{Corollary}}
\newtheorem{proof}{\textbf{Proof}}
\def\ScaleIfNeeded{%
\ifdim\Gin@nat@width>\linewidth \linewidth \else \Gin@nat@width
\fi } \makeatother
\begin{document}
%

\title{{Secure Communications in Millimeter Wave\\ Ad Hoc Networks}}

\author{
}
\author{Yongxu Zhu, Lifeng Wang,~\IEEEmembership{Member,~IEEE,}  Kai-Kit Wong,~\IEEEmembership{Fellow,~IEEE,} and Robert W. Heath, Jr.,~\IEEEmembership{Fellow,~IEEE}
\thanks{ Yongxu Zhu, L. Wang, and K.-K. Wong are with the Department of Electronic and Electrical Engineering, University College London, London, UK (Email: $\rm\{yongxu.zhu.13,lifeng.wang, kai$-$\rm kit.wong\}@ucl.ac.uk$).}
\thanks{Robert W. Heath, Jr. is with  Department of Electrical and Computer Engineering, The University of Texas at Austin, Texas, USA (Email: $\rm{rheath}@ece.utexas.edu$).}
}

\maketitle

\begin{abstract}
Wireless networks with directional antennas, like millimeter wave (mmWave) networks, have enhanced security.  {For} a large-scale mmWave ad hoc network in which eavesdroppers are randomly located, however, eavesdroppers can still intercept the confidential messages, since they may reside in the signal beam. {This} paper explores the potential of physical layer security in  mmWave ad hoc networks. Specifically, {{we characterize  the impact of mmWave channel characteristics, random blockages, and antenna gains on the secrecy performance.}}  For the special case of uniform linear array (ULA), a tractable approach is proposed to evaluate the average achievable secrecy rate. We also characterize the impact of artificial noise in such networks.  Our results reveal that in the low transmit power regime, the use of low mmWave frequency achieves better secrecy performance, and when increasing transmit power, a transition from low mmWave frequency to high mmWave frequency is demanded for obtaining a higher secrecy rate. More antennas at the transmitting nodes are needed to decrease the antenna gain obtained by the eavesdroppers when using ULA.  Eavesdroppers can intercept more information by using a wide beam pattern. Furthermore, the use of artificial noise may be ineffective for enhancing the secrecy rate.
\end{abstract}

\begin{IEEEkeywords}
Ad hoc, millimeter wave, beamforming, uniform linear array, average achievable secrecy rate.
\end{IEEEkeywords}

\section{Introduction}
Wireless ad hoc networks have been widely applied in several areas including tactical
networks, device-to-device, and personal area networking. {{Unfortunately, interference from nearby transmitters severely deteriorate the throughput of ad hoc networks either through reducing the link quality, or reducing the number of links that can operate simultaneously.}} Due to the lack of central coordination, beamforming or directional antennas are one approach for suppressing interference~\cite{Kaibin_adhoc_2012}. Recently, millimeter wave (mmWave) has been viewed as a promising technology for supporting high-speed data rate in the mobile cellular systems~\cite{ZP2011}. MmWave with directional transmissions and large bandwidths provides rich opportunities for ad hoc networks. Compared to the lower frequency counterpart, mmWave ad hoc networks experience less interference and achieve greater rate coverage~\cite{Andrew_Thornburg_2014}.

Security in ad hoc networks is important~\cite{Lidong_zhou_1999}. The traditional higher-layer key distribution and management may increase the burden of transmitting confidential messages in such decentralized networks. Recent developments have shown that by leveraging the randomness inherent in wireless channels, physical layer security can be a low-complexity alternative for safeguarding complex wireless networks~\cite{huangjing2011}. {{By taking advantage of unique mmWave channel features, this paper establishes the  potential of physical layer security in mmWave ad hoc networks.}}

\subsection{Related Works and Motivation}
Early work has studied the effects of channel fading on physical layer security, see, e.g.,
~\cite{YingbinLiang2008,lifeng2014physical} and the references therein. The implementation of cooperative
jamming and artificial noise can degrade the eavesdropper's channel and further improve secrecy~\cite{huangjing2011,Zhou2010}. Recently, new network architectures and emerging transmission technologies such as  heterogeneous networks (HetNets) and massive multiple-input
multiple-output (MIMO) have promoted more research on physical layer security. In HetNets, dense small cells are deployed, which results in ubiquitous inter-tier and intra-tier interference. For secrecy communications at the physical layer, such interference
can be utilized for confounding the eavesdroppers. In~\cite{Tiejun_Lv}, spectrum allocation and transmit beamforming were designed
for maximizing the secrecy rate in a two-tier HetNet. {In \cite{H_M_Wang_2016}, an access threshold-based secrecy mobile association policy was proposed in a $K$-tier HetNet.} Massive MIMO uses large number of antennas to provide high array gains for legitimate receivers. The work of~\cite{Jue_Wang_2015_MISO} studied the case of jamming when the transmitter equipped with large number of antennas served one single-antenna receiver. It was shown
in~\cite{junzhu2014} that the application of random artificial noise in massive MIMO cellular networks can achieve a better performance/complexity tradeoff compared to the
 conventional null space based artificial noise. In \cite{Wang_Dec_2016}, secrecy and energy efficiency in massive MIMO aided heterogeneous C-RAN was studied, which showed that the centralized and distributed large-scale antenna systems can coexist to enhance the secrecy and cut power consumption. {{While the aforementioned literature has provided a solid understanding of physical layer security in the wireless systems with lower-frequency bands (sub-6 GHz), the research on mmWave secrecy communication is in its infancy. }}


Physical layer security in decentralized wireless networks such as sensor and ad hoc type of networks has been investigated in ~\cite{Xiangyun_2011_Aug,Deng_WSN_2016,XiZhang_2013_TIFS,Cai_2014}. {{{In \cite{Xiangyun_2011_Aug}, secrecy transmission capacity under connection outage and secrecy outage concerns was examined in an ad hoc network, in which both legitimate nodes and eavesdroppers are randomly distributed.}}} In \cite{Deng_WSN_2016},  the average achievable secrecy rate was examined in a three-tier sensor networks consisting of sensors, access points and sinks, and it was shown that there exists optimal number of access points for maximizing the average achievable secrecy rate. Secrecy enhancement in ad hoc networks was studied in \cite{XiZhang_2013_TIFS}, where two schemes for the generation of artificial noise were compared. In \cite{Cai_2014}, relay transmission in ad hoc networks was evaluated from the perspective of security connectivity. {{Again, these works solely focus on the lower-frequency secrecy communications in decentralized wireless networks.}}

Due to the peculiar mmWave channel characteristics, physical layer security in mmWave systems has recently attracted much interest~\cite{Valliappan2013,L_Wang_2014,yang2015safeguarding,Steinmetzer_2015}. In  \cite{Valliappan2013},  mmWave antenna subset modulation was designed to secure point-to-point communication by introducing randomness in the received constellation, which confounds the eavesdropper.  In~\cite{L_Wang_2014}, the mmWave multiple-input, single-output, multiple-eavesdroppers channel was considered in a single cell, and {{it was indicated  that high-speed secure link at the mmWave frequencies could be reached with the assistance of large antenna arrays and large mmWave bandwidths.}} The work of \cite{yang2015safeguarding} illustrated the impacts of key factors such as large bandwidth and directionality on the physical layer security in mmWave networks, and provided more opportunities and challenges in this field. {In \cite{Steinmetzer_2015}, it was shown that} even only one eavesdropper may be able to successfully intercept highly directional mmWave transmission. In the work of \cite{Steinmetzer_2015}, although the eavesdropper was located outside the signal beam, reflections could be exploited by the eavesdropper that used small-scale reflectors within the beam, which has little blockage effect on the legitimate receiver's performance.
{Secrecy outage of an mmWave cellular network was analyzed in \cite{chao_wang_2016}, where authorized users and eavesdroppers were  assumed to be  single-omnidirectional-antenna nodes. In~\cite{Tharmalingam_2016}, secrecy outage of a mmWave overlaid microwave network was derived by considering a specific blockage model and assuming that mmWave channel undergoes Nakagami-$m$ fading for tractability. In two-way amplify-and-forward MIMO relaying networks, \cite{Shiqing_Gong_2016} proposed mmWave secrecy beamforming schemes to maximize the secrecy sum rate.}

{{ Prior work only pays attention to the physical layer security in lower-frequency ad hoc networks. In mmWave ad hoc networks, the directional communication with narrow beam is more robust against eavesdropping. {{The mmWave link is sensitive to the blockage and experiences higher propagation loss, and mmWave channel undergoes rapid fluctuation and has much lower coherence time than the today's networks because of much larger Doppler spread~\cite{S_Rangan_2014}. Hence mmWave link is more random and hard to be intercepted by malicious eavesdroppers compared to the low-frequency counterpart.}}

\subsection{Approach and Contributions}
This paper studies physical layer security in  mmWave ad hoc networks. Our analysis accounts for the key features of mmWave channel and the effects of different antenna array gains and node densities. The detailed contributions and insights are summarized as follows.
\begin{itemize}
  \item We model the mmWave ad hoc networks with the help of stochastic geometry, to characterize the random spatial locations of  transmitting nodes and eavesdroppers. {{The effect of  blockage is also incorporated such that links are either line-of-sight (LoS) or  non-line-of-sight (NLoS).}} The average achievable secrecy rate is derived to quantify the impacts of key system parameters such as antenna gain, transmitting node and eavesdropper densities on the secrecy performance. Our results show that with increasing transmit power, a transition from low mmWave frequency to high mmWave frequency is needed for achieving better secrecy performance. {Compared} to eavesdropping, the performance is dominated by the surrounding interference in the high node density case. {{The use of different mmWave frequencies has a big impact on the secrecy performance, which needs to be carefully selected in practice.}}

  \item We develop an approach to evaluate the average achievable secrecy rate when utilizing uniform linear array (ULA). Our results show that {{adding more antennas at the transmitting node decreases antenna gains obtained by eavesdroppers.}}

  \item We examine the impact of artificial noise on the secrecy rate. Our results show that in mmWave ad hoc networks, the use of artificial noise can still enhance the secrecy when power allocation between the information signal and artificial noise is properly set. {{Moreover, the use of artificial noise may have an adverse effect on the secrecy rate in the low node density scenarios,  where more transmit power should be allocated to improve the transmission rate between the transmitting node and its intended receiver.}}
\end{itemize}

The remainder of this paper is organized as follows. Section II presents the network and the mmWave channel model. Section III evaluates the average achievable secrecy rate of this network and also discusses the implementation of uniform linear array. Section IV analyzes the use of artificial noise on the secrecy performance.  Numerical results are provided in Section V and conclusion is drawn in Section VI.

\section{System Description}
Consider a {mmWave} ad hoc network, where a group of transmitting nodes are randomly distributed following a homogeneous Poisson point process (PPP) $\Phi$ with $\lambda$. The dipole model is adopted~\cite{Baccelli2009}, where the distance for a typical transmitting node-receiver is fixed at $r$, and the typical receiver is assumed to be located at the origin. Both the transmitting node and its corresponding receiver use  {directional} beamforming for data transmission, which is intercepted by multiple eavesdroppers. {We} consider the case of passive eavesdropping without any active attacks to deteriorate the information transmission. The locations of eavesdroppers are modeled following an independent homogeneous PPP $\Phi_e$ with  $\lambda_e$. {{We consider the directional beamforming and use a sectored model to analyze the beam pattern~\cite{Andrew_Thornburg_2014,A_M_Hunter_2008,Tianyang_arxiv2014,Singh_2015} (See Fig. 1 in~\cite{Andrew_Thornburg_2014})}}, i.e.,  the effective antenna gain for an interferer $i$ seen by the typical receiver is expressed as
\begin{align}\label{array_gain_pattern}
{G_i} = \left\{ \begin{array}{l}
{G_\mathrm{M}^2}{\rm{,}}~\qquad {\Pr _{\mathrm{MM}}}{\rm{ = }}{\left( {\frac{\theta }{2\pi }} \right)^2} ,\\
{G_\mathrm{M}}{G_\mathrm{m}}{\rm{,}}\quad {\Pr _{\mathrm{Mm}}}{\rm{ = }}\frac{{\theta \left( {2\pi  - \theta } \right)}}{{{(2\pi)^2}}} ,\\
{G_\mathrm{m}}{G_\mathrm{M}}{\rm{,}}\quad {\Pr _{\mathrm{Mm}}}{\rm{ = }}\frac{{\theta \left( {2\pi  - \theta } \right)}}{{{(2\pi)^2}}} ,\\
{G_\mathrm{m}^2},~\qquad {\Pr _{\mathrm{mm}}} = {\left( {\frac{{2\pi  - \theta }}{2\pi }} \right)^2} ,
\end{array} \right.
\end{align}
where $G_{\mathrm{M}}$ denotes the main-lobe gain with the beamwidth $\theta$,  $G_{\mathrm{m}}$ denotes the back-lobe gain, and $\mathrm{Pr}_{\ell k}$ ($\ell, k \in \left\{\mathrm{M},\mathrm{m}\right\}$) denotes the probability that the antenna gain $G_\ell G_k$ occurs. We assume that the maximum array gain ${G_\mathrm{M}}{G_{\mathrm{M}}}$ is obtained for the typical transmitting node-receiver.

In light of the blockage effects in the outdoor scenario, the signal path can be LoS or  NLoS.  We denote $f_\mathrm{Pr}\left(R\right)$ as the probability that a link at a distance $R$ is LoS, while the NLoS probability of a link is $1-f_\mathrm{Pr}\left(R\right)$. The LoS probability function $f_\mathrm{Pr}\left(R\right)$ can be obtained from
 field measurements or stochastic blockage models~\cite{Tianyang_arxiv2014}.


 We employ a short-range propagation model in which given a distance $\left|X_i\right|$, the path loss function is denoted as $L\left(\left|X\right|\right)=\beta { {{\left( \max\left(d,\left|X\right|\right) \right) }}^{ - {\alpha}}}$ with a reference distance $d$~\cite{Baccelli2006_gen}, where $\alpha$ is the path loss exponent depending on the LoS or NLoS link, namely $\alpha=\alpha_\mathrm{LoS}$ for LoS link and $\alpha=\alpha_\mathrm{NLoS}$ for NLoS link, and {$\beta$ is the frequency independent constant parameter of the path loss, which is commonly set as ${(\frac{{\text{c}}}{{4\pi {f_c}}})^2}$ with $c=3 \times 10^8 \rm m/s$ and the carrier frequency $f_c$. Hence there are different $\beta$ values for different mmWave frequencies, which allows us to examine the effects of using different mmWave frequencies.}  Note that the sparse scattering mmWave environment makes many traditional fading distributions invalid for the modeling of the mmWave channel~\cite{Ayach2014}. For tractability, we neglect small scale fading as~\cite{TED2013IEEE_Access} argues that fading is not significant in LOS links with significant beamforming.   Hence the signal-to-interference-plus-noise ratio (SINR) at a typical receiver is written as
 \begin{align}\label{SINR_typical_receiver}
{\gamma _o} = \frac{{{P_t}G_{\text{M}}^2 L\left(r\right)   }}{{\sum\nolimits_{i \in \Phi/o} {{P_t}{G_i} L\left(\left|X_i\right|\right)    }  + {\sigma_o^2}}},
 \end{align}
 where $P_t$ denotes the transmit power,   $\left| {{X_{i}}} \right|$ is the distance between the typical receiver and the interferer $i \in \Phi/o$~(except the typical transmitting node),  and $\sigma_o^2$ is the noise power.

When the eavesdropping channel is degraded under the effect of interference, secrecy indeed becomes better. In this paper, we focus on the worst-case eavesdropping scenario, where all the eavesdroppers can  {mitigate} the interference. In fact, eavesdroppers are usually assumed to have strong ability, and they may cooperate with each other to cancel the interference, as seen in~\cite{Geraci_downlink}. We assume that the eavesdropping channels are independent of the legitimate channel\footnote{{We highlight that the secrecy in the mmWave correlated wiretap channel is a novel and important research area, and the existing contributions at lower frequencies can be seen in~\cite{Jeon_H_2011}.}}. In such a scenario, the most malicious eavesdropper that has the largest SINR of the received signal dominates the secrecy rate~\cite{LunDong}. Thus, the SINR at the most malicious eavesdropper is written as
 \begin{align}\label{SINR_eavesdropping}
{\gamma _{{e^*}}} = \mathop {\max }\limits_{e \in {\Phi _e}} \left\{ {\frac{{{P_t}{G_e} L\left(\left|X_e\right|\right)     }}{{{\sigma_e^2}}}} \right\},
 \end{align}
where $\left|X_e\right|$ is the distance between the typical transmitting node and the eavesdropper $e \in {\Phi _e}$, $\sigma_e^2$ is the power of noise and weak interference, and $G_e$ is the antenna gain seen from the eavesdropper $e \in {\Phi _e}$  described by
\begin{align}\label{array_gain_pattern_eavesdropper}
{G_e} = \left\{ \begin{array}{l}
{G_\mathrm{M}}{G^e_{\mathrm{M}}}{\rm{,}}\quad {\Pr _{\mathrm{MM}}}{\rm{ = }}{ {\frac{\theta \phi}{\left(2\pi\right)^2}} } ,\\
{G_\mathrm{M}}{G^e_\mathrm{m}}{\rm{,}}\quad {\Pr _{\mathrm{Mm}}}{\rm{ = }}\frac{{\theta \left( {2\pi  - \phi } \right)}}{{\left(2\pi\right)^2}} ,\\
{G_\mathrm{m}}{G^e_\mathrm{M}}{\rm{,}}\quad {\Pr _{\mathrm{Mm}}}{\rm{ = }}\frac{{ \left({2\pi  - \theta }\right)\phi }}{{{\left(2\pi\right)^2}}} ,\\
{G_\mathrm{m}}{G^e_\mathrm{m}},\quad {\Pr _{\mathrm{mm}}} = \frac{{(2\pi  - \theta)(2\pi  - \phi) }}{\left({2\pi }\right)^2} ,
\end{array} \right.
\end{align}
in which $\phi$, ${G^e_{\mathrm{M}}}$ and ${G^e_\mathrm{m}}$ are the beamwidth of the main-lobe, main-lobe gain and back-lobe gain of the beam pattern used by the eavesdropper $e \in {\Phi _e}$, respectively.

\section{Secrecy Evaluation}
In this section, we analyze the average achievable secrecy rate in mmWave ad hoc networks. As shown in~\cite{Bloch}, physical layer security is commonly characterized by the secrecy rate $R_\mathrm{s}$, which is defined as
\begin{align}\label{secrecy_rate}
{R_\mathrm{s}} = {\left[{{{\log }_2}\left( {1 + {\gamma_o}} \right) -{{\log }_2}\left( {1 + {\gamma_{e^{*}}}} \right)} \right]^+}.
\end{align}

{  Based on \eqref{secrecy_rate}, we have the following proposition.
\begin{proposition}\label{Proposition_1}
In mmWave ad hoc networks, the average achievable secrecy rate  is given by
\begin{align}\label{average_Secrecy_rate}
{{\overline{R}}_\mathrm{s}} = \left[{\overline{R}}-{\overline{R}}_{e^{*}}\right]^+,
\end{align}
where $ [x]^+=\max\{x,0\} $, ${\overline{R}}=\mathbb{E}\left[{{\log }_2}\left( {1 + {\gamma_o}} \right)\right]$ is the average rate of the channel between the typical transmitting node and its receiver, and ${\overline{R}}_{e^{*}}=\mathbb{E}\left[ {{\log }_2}\left( {1 + {\gamma_{e^{*}}}} \right) \right]$ is the average rate of the channel between the typical transmitting node and the most malicious eavesdropper.
\end{proposition}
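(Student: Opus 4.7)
The plan is to start from the definition of the instantaneous secrecy rate in~\eqref{secrecy_rate} and pass to the average, being careful about how the positive-part operator $[\cdot]^+$ interacts with expectation. A naive application of expectation to \eqref{secrecy_rate} yields
\begin{equation*}
\mathbb{E}[R_\mathrm{s}] \;=\; \mathbb{E}\!\left[\left[\log_2(1+\gamma_o)-\log_2(1+\gamma_{e^{*}})\right]^+\right],
\end{equation*}
which is in general strictly larger than $[\mathbb{E}[\log_2(1+\gamma_o)]-\mathbb{E}[\log_2(1+\gamma_{e^{*}})]]^+$ by Jensen's inequality applied to the convex function $[\cdot]^+$. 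So the statement is not a direct measure-theoretic identity; the proof must justify interchanging the expectation with $[\cdot]^+$ on the basis of an ergodic/coding argument.

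The approach I would take is to invoke the ergodic secrecy capacity framework: because the PPP $\Phi$ and $\Phi_e$ realizations (and hence $\gamma_o$ and $\gamma_{e^{*}}$) vary independently across many codeword slots, and because the paper has already stipulated that the eavesdropping channels are independent of the legitimate channel, Wyner-style wiretap coding across a long sequence of independent fading/geometry realizations makes any nonnegative target rate below $\mathbb{E}[\log_2(1+\gamma_o)]-\mathbb{E}[\log_2(1+\gamma_{e^{*}})]$ achievable with vanishing secrecy outage. I would first unfold the definition, then write
\begin{equation*}
\mathbb{E}[\log_2(1+\gamma_o)-\log_2(1+\gamma_{e^{*}})] = \overline{R}-\overline{R}_{e^{*}}
\end{equation*}
by linearity of expectation (no independence is actually needed for this step; it is used only to decouple the two averages computationally later in the paper). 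Finally, the outermost $[\cdot]^+$ is imposed because the secrecy rate is by convention nonnegative: if $\overline{R}_{e^{*}}\ge \overline{R}$, transmitting no information at all trivially achieves zero secrecy rate.

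The main obstacle will be justifying the exchange of $\mathbb{E}$ and $[\cdot]^+$ rigorously. I would address this by noting that in this model $\gamma_o$ and $\gamma_{e^{*}}$ are statistically independent, so coding over independent realizations of the network (spatial ergodicity of the PPPs together with time-varying channel realizations, consistent with the rapidly fluctuating mmWave channel and low coherence time cited earlier via~\cite{S_Rangan_2014}) yields the ergodic secrecy rate $\overline{R}-\overline{R}_{e^{*}}$ as an achievable average rate, with the $[\cdot]^+$ guaranteeing nonnegativity. The remainder of the paper's analysis then reduces to separately computing $\overline{R}$ and $\overline{R}_{e^{*}}$ using stochastic geometry, which matches the decomposition asserted in~\eqref{average_Secrecy_rate}.
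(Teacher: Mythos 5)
Your proposal is correct and follows essentially the same route as the paper: the paper likewise justifies \eqref{average_Secrecy_rate} as an ergodic secrecy rate, arguing that the rapid fluctuation and short coherence time of the mmWave channel permit coding over many coherence intervals so that ${\overline{R}}$ is achievable, that the passive eavesdroppers' lack of CSI at the transmitter forces the rate to depend only on the legitimate channel, and that the leaked average rate cannot exceed ${\overline{R}}_{e^{*}}$. Your explicit remark that $\mathbb{E}\left[[\cdot]^+\right]\ge\left[\mathbb{E}[\cdot]\right]^+$ (so the formula is not a pointwise-averaging identity) is a useful clarification the paper leaves implicit, but it does not change the substance of the argument.
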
}
{ \begin{proof}
We first show that the average rate ${\overline{R}}$ is achievable by considering the  fact that mmWave channel experiences rapid fluctuation, and  the coherence time in mmWave frequencies is around an order of magnitude lower than that at sub-6 GHz as the Doppler shift linearly scales with frequency~\cite{S_Rangan_2014,Popovski_TCOM_2016}. Moreover, mmWave links  undergo more dramatic swings between LoS and NLoS  due to the high level of shadowing~\cite{S_Rangan_2014}. Therefore, coding over many coherence intervals is possible, and thus the average rate ${\overline{R}}$ can be achieved.

Since the malicious eavesdroppers only intercept the secrecy massages passively without any transmissions, the channel state information (CSI) of the eavesdropping channels cannot be obtained by the transmitting node, and the transmission rate of a typical transmitting node is only dependent on the CSI of the channel between itself and the typical receiver. In addition, the maximum average rate in an arbitrary wiretap channel cannot exceed ${\overline{R}}_{e^{*}}$. As such, we obtain the average achievable secrecy rate in mmWave ad hoc networks as \eqref{average_Secrecy_rate}.
\end{proof}}

To evaluate the average achievable secrecy rate, we first derive the average rate ${\overline{R}}$, which is given by the following theorem.
\begin{theorem}\label{theorem_1}
The exact average rate between the typical transmitting node and its intended receiver  is given by
\begin{align}\label{R_average_rate}
{\overline{R}}=\frac{1}{\ln2}\int_0^\infty  {\frac{1}{{z}}(1 - \Xi_1(z))\Xi_2(z) {e^{ -z\sigma_o^2}}dz},
\end{align}
where $\Xi_1(z)$ and $\Xi_2(z) $ are respectively given by \eqref{Xi_z_x} and \eqref{Xi_z_1} at the top of next page{\color{blue}{\footnote{{We consider that the typical legitimate channel and the interfering channels are independent, due to the fact that the coherence time of mmWave channel is around an order of hundreds of microseconds and much shorter than today's cellular systems, and mmWave links experience more dramatic swings in path loss~\cite{S_Rangan_2014}.}}}}.
\begin{figure*}[!t]
\normalsize
\begin{align}\label{Xi_z_x}
\Xi_1(z)=f_\mathrm{Pr}\left(r\right)e^{- z{{P_t}G_{\text{M}}^2\beta \left( {\max {{\{ r,d\} }}} \right)^{ - {\alpha_\mathrm{LoS}}}}}
+(1-f_\mathrm{Pr}\left(r\right))e^{- z{{P_t}G_{\text{M}}^2\beta \left( {\max {{\{ r,d\} }}} \right)^{ - {\alpha_\mathrm{NLoS}}}}}
\end{align}
\hrulefill 
\begin{align}\label{Xi_z_1}
\Xi_2(z)=\exp\Big( - 2\pi \lambda \int_0^\infty  {f_{{\rm{Pr}}}}\left( u \right) ( 1 -\Omega_1(z,u)) udu-2\pi \lambda \int_0^\infty (1-  {f_{{\rm{Pr}}}}\left( u \right)) ( 1 -\Omega_2(z,u)) udu\Big)
\end{align}
with
\begin{equation*} 
\left\{\begin{aligned}
\Omega_1(z,u)&=\sum\limits_{\ell, k  \in \left\{ {{\text{M}},{\text{m}}} \right\}} {\Pr }_{\ell k } \times
{{{e}}^{ - z{P_t}{G_\ell G_k}\beta {\left( {\max {{\{ u,d\} }}} \right)^{ - {\alpha_{{\rm{LoS}}}}}}}}\\
\Omega_2(z,u)&=\sum\limits_{\ell, k  \in \left\{ {{\text{M}},{\text{m}}} \right\}} {\Pr }_{\ell k } \times
{{{e}}^{ - z{P_t}{G_\ell G_k}\beta {\left( {\max {{\{ u,d\} }}} \right)^{ - {\alpha_{{\rm{NLoS}}}}}}}}
\end{aligned}\right.
\end{equation*}
\hrulefill
\vspace*{0pt}
\end{figure*}
\end{theorem}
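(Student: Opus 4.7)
The plan is to start from the definition $\overline{R} = \mathbb{E}[\log_2(1+\gamma_o)]$ with $\gamma_o$ given in \eqref{SINR_typical_receiver}, and apply the Frullani-type identity
\begin{equation*}
\ln\!\left(1+\frac{S}{N+I}\right)=\int_0^\infty \frac{e^{-z(N+I)}-e^{-z(S+N+I)}}{z}\,dz=\int_0^\infty \frac{e^{-zN}\,e^{-zI}\,(1-e^{-zS})}{z}\,dz,
\end{equation*}
valid for nonnegative $S,I,N$. Setting $S=P_t G_{\text{M}}^2 L(r)$, $I=\sum_{i\in\Phi/o}P_t G_i L(|X_i|)$, and $N=\sigma_o^2$, and exchanging expectation with the outer integral (Fubini, since the integrand is nonnegative), one obtains
\begin{equation*}
\overline{R}=\frac{1}{\ln 2}\int_0^\infty \frac{1}{z}\bigl(1-\mathbb{E}[e^{-zS}]\bigr)\,\mathbb{E}[e^{-zI}]\,e^{-z\sigma_o^2}\,dz,
\end{equation*}
where I have used that the signal $S$ and the interference $I$ are independent (the footnote already asserts independence between the typical legitimate channel and the interfering channels). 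It remains to identify $\mathbb{E}[e^{-zS}]$ with $\Xi_1(z)$ and $\mathbb{E}[e^{-zI}]$ with $\Xi_2(z)$.

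For $\Xi_1(z)$, I would condition on the LoS/NLoS status of the typical link: with probability $f_\mathrm{Pr}(r)$ the path loss is $\beta(\max\{r,d\})^{-\alpha_{\mathrm{LoS}}}$, otherwise $\beta(\max\{r,d\})^{-\alpha_{\mathrm{NLoS}}}$. Since $S$ is then deterministic within each branch, $\mathbb{E}[e^{-zS}]$ is the mixture in \eqref{Xi_z_x}.

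For $\Xi_2(z)$, I would invoke the independent-thinning property of the PPP to split $\Phi$ into two independent PPPs $\Phi^{\mathrm{LoS}}$ and $\Phi^{\mathrm{NLoS}}$ with intensities $\lambda f_\mathrm{Pr}(u)$ and $\lambda(1-f_\mathrm{Pr}(u))$, respectively. Then the Laplace transform factorises as $\mathbb{E}[e^{-zI}]=\mathbb{E}[e^{-zI_{\mathrm{LoS}}}]\mathbb{E}[e^{-zI_{\mathrm{NLoS}}}]$, and Slivnyak's theorem lets me ignore the conditioning on the typical transmitter at the origin. The PGFL of a PPP combined with independence of the random antenna gain $G_i$ across interferers gives, for the LoS piece,
\begin{equation*}
\mathbb{E}[e^{-zI_{\mathrm{LoS}}}]=\exp\!\Bigl(-2\pi\lambda\int_0^\infty f_\mathrm{Pr}(u)\bigl(1-\mathbb{E}_{G_i}[e^{-zP_tG_i\beta(\max\{u,d\})^{-\alpha_{\mathrm{LoS}}}}]\bigr)u\,du\Bigr),
\end{equation*}
and analogously for the NLoS piece. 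Averaging over the four-point distribution of $G_i$ in \eqref{array_gain_pattern} produces exactly the sums $\Omega_1(z,u)$ and $\Omega_2(z,u)$, and multiplying the two exponentials yields \eqref{Xi_z_1}. Finally, the prefactor $1/\ln 2$ converts $\ln$ to $\log_2$.

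The routine obstacle is just bookkeeping (splitting the PPP, averaging over the discrete $G_i$, and tracking the $\max\{u,d\}$ truncation inside both $\Omega_1$ and $\Omega_2$). The one step requiring genuine care is the application of Fubini to swap the $z$-integral with $\mathbb{E}$: one should verify the integrand $\tfrac{1}{z}(e^{-z(N+I)}-e^{-z(S+N+I)})$ is nonnegative and integrable in $z$ almost surely (it is, because near $z=0$ the difference is $O(S)$ and at infinity it decays exponentially thanks to the noise factor $e^{-z\sigma_o^2}$), which legitimises the exchange and gives \eqref{R_average_rate}.
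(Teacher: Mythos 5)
Your proposal is correct and follows essentially the same route as the paper's Appendix A: the paper invokes the identity $\ln(1+\gamma)=\int_0^\infty \frac{1}{z}(1-e^{-z\gamma})e^{-z}\,dz$ (citing Hamdi's lemma) and, after the same change of variable, factorizes using the independence of the signal and interference terms, computes $\Xi_1$ by total expectation over the LoS/NLoS status, and computes $\Xi_2$ by thinning the PPP and applying its Laplace functional, exactly as you do. Your explicit Frullani derivation and the Tonelli justification for the interchange are just slightly more self-contained versions of the same steps.
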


\begin{proof}
See Appendix A.
\end{proof}
The exact average rate given in \eqref{R_average_rate} can be lower bounded as a simple expression, which is as follows.
\begin{theorem}\label{theorem_1_1}
The lower bound of the average rate ${\overline{R}}$ is given by
\begin{align}\label{LB_average_rate_Alice}
{\overline{R}}^{\mathrm{L}}_1= {\log _2}\left( 1 + \frac{{G_{\text{M}}^2\beta }r^{-{\overline{\alpha}}}}{{ {  \lambda \bar{ G} \Lambda  + \frac{N_o}{P_t}} }}  \right),
\end{align}
where ${\overline{\alpha}}=\left({\alpha _{{\text{LoS}}}} - {\alpha _{{\text{NLoS}}}}\right)f_\mathrm{Pr}\left(r\right) + {\alpha _{{\text{NLoS}}}}$, the average antenna gain ${{\bar G} }= \sum\nolimits_{\ell ,k \in \left\{ {{\text{M}},{\text{m}}} \right\}} {{G_{\ell }G_k}\text{Pr}_{\ell  k}}$, and $\Lambda$ is
\begin{align}\label{Lambda_1_1_1}
&\Lambda =\beta 2\pi  \Big( \int_{\text{0}}^d \big({({d^{ - {\alpha _{{\text{LoS}}}}}} - {d^{ - {\alpha _{{\text{NLoS}}}}}})} r{f_{{\text{Pr}}}}\left( r \right) + {d^{ - {\alpha _{{\text{NLoS}}}}}}r \big) dr \nonumber \\
& + \int_d^\infty  \big({({r^{1 - {\alpha _{{\text{LoS}}}}}} - {r^{1 - {\alpha _{{\text{NLoS}}}}}})}
  {f_{{\text{Pr}}}}\left( r \right) + {r^{1 - {\alpha _{{\text{NLoS}}}}}} \big) dr\Big).
\end{align}
 When the LoS probability is $f_\mathrm{Pr}\left(R\right)=e^{-\varrho R}$~\cite{Tianyang_arxiv2014}, \eqref{LB_average_rate_Alice} reduces to a closed-form expression with
\begin{align}\label{Lambda_r}
\Lambda  &= \beta 2\pi   \times\nonumber \\
&\Big[ \frac{{1 - {e^{ - {{d}}\varrho }}(1 + {{d}}\varrho )}}{{{\varrho ^2}}}(\frac{1}{{{d^{{\alpha _{{\text{LoS}}}}}}}} - \frac{1}{{{d^{{\alpha _{{\text{NLoS}}}}}}}}) + \frac{{\Gamma (2 - {\alpha _{{\text{LoS}}}},d\varrho )}}{{{\varrho ^{2 - {\alpha _{{\text{LoS}}}}}}}} \nonumber \\
&+ \frac{{{\alpha _{{\text{NLoS}}}} \cdot {d^{2 - {\alpha _{{\text{NLoS}}}}}}}}{{2({\alpha _{{\text{NLoS}}}} - 2)}} - \frac{{\Gamma (2 - {\alpha _{{\text{NLoS}}}},d\varrho )}}{{{\varrho ^{2 - {\alpha _{{\text{NLoS}}}}}}}}\Big].
\end{align}
\end{theorem}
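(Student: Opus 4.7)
The plan is to chain a Jensen-type convexity inequality with an explicit Campbell-theorem computation of the mean interference, and then to evaluate the resulting one-dimensional integrals in closed form for the exponential LoS probability.

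First I would note that the map $u\mapsto \log_{2}(1+2^{u})$ is convex on $\mathbb{R}$, so Jensen's inequality applied to $u=\log_{2}\gamma_{o}$ gives $\overline{R}\geq \log_{2}\bigl(1+2^{\mathbb{E}[\log_{2}\gamma_{o}]}\bigr)$. Writing $\gamma_{o}=S/(I+\sigma_{o}^{2})$ with $S=P_{t}G_{\mathrm{M}}^{2}L(r)$ and $I=\sum_{i\in\Phi/o}P_{t}G_{i}L(|X_{i}|)$, I would split $\mathbb{E}[\log_{2}\gamma_{o}]=\mathbb{E}[\log_{2}S]-\mathbb{E}[\log_{2}(I+\sigma_{o}^{2})]$ and use the concavity of $\log_{2}$ to bound the second term from above by $\log_{2}(\mathbb{E}[I]+\sigma_{o}^{2})$. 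For the signal term, at the fixed link distance $r>d$ (so $\max\{r,d\}=r$) the only randomness in $L(r)$ is the LoS/NLoS label, hence $\mathbb{E}[\log_{2}S]=\log_{2}(P_{t}G_{\mathrm{M}}^{2}\beta r^{-\overline{\alpha}})$ with $\overline{\alpha}$ as defined. Chaining these with the monotonicity of $\log_{2}(1+\cdot)$ produces exactly the form in \eqref{LB_average_rate_Alice}, with the denominator given by $\mathbb{E}[I]/P_{t}+\sigma_{o}^{2}/P_{t}$.

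Second, to identify $\mathbb{E}[I]/P_{t}$ with $\lambda\overline{G}\,\Lambda$, I would view $\Phi/o$ as an independently-marked PPP of intensity $\lambda$ whose marks are the antenna-gain $G$ (with distribution from \eqref{array_gain_pattern} and mean $\overline{G}=\sum_{\ell,k\in\{\mathrm{M},\mathrm{m}\}}G_{\ell}G_{k}\mathrm{Pr}_{\ell k}$) together with an independent LoS/NLoS indicator. Campbell's theorem reduces $\mathbb{E}[I]$ to $P_{t}\lambda\overline{G}\cdot 2\pi\int_{0}^{\infty}\mathbb{E}\bigl[\beta(\max\{u,d\})^{-\alpha}\bigr]\,u\,du$, and splitting the $u$-integral at $u=d$ while averaging the LoS/NLoS indicator in the integrand reproduces the four terms of \eqref{Lambda_1_1_1} line by line.

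Finally, for $f_{\mathrm{Pr}}(R)=e^{-\varrho R}$ I would evaluate the four integrals explicitly. The $[0,d]$ pieces are elementary, using $\int_{0}^{d}r\,e^{-\varrho r}\,dr=\varrho^{-2}\bigl(1-e^{-\varrho d}(1+\varrho d)\bigr)$ and $\int_{0}^{d}r\,dr=d^{2}/2$. On $[d,\infty)$, the substitution $t=\varrho r$ turns $\int_{d}^{\infty}r^{1-\alpha}e^{-\varrho r}\,dr$ into $\varrho^{\alpha-2}\Gamma(2-\alpha,d\varrho)$ for $\alpha\in\{\alpha_{\mathrm{LoS}},\alpha_{\mathrm{NLoS}}\}$, while the unweighted tail $\int_{d}^{\infty}r^{1-\alpha_{\mathrm{NLoS}}}\,dr=d^{2-\alpha_{\mathrm{NLoS}}}/(\alpha_{\mathrm{NLoS}}-2)$ combines with the $d^{2-\alpha_{\mathrm{NLoS}}}/2$ contribution from $[0,d]$ to give the compact coefficient $\alpha_{\mathrm{NLoS}}\,d^{2-\alpha_{\mathrm{NLoS}}}/(2(\alpha_{\mathrm{NLoS}}-2))$ in \eqref{Lambda_r}. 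The main obstacle is bookkeeping rather than conceptual: keeping the two Jensen steps pointing in opposite directions so that the chain still terminates at a lower bound, and arranging the NLoS pieces of the two $u$-regions so that a divergent-looking indefinite tail collapses with an elementary polynomial term into the single coefficient displayed in the statement.
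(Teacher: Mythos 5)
Your proposal is correct and follows essentially the same route as the paper's Appendix B: the same two oppositely-directed Jensen steps (convexity of $u\mapsto\log_2(1+2^u)$ to pass the expectation inside, then concavity of the logarithm applied to the interference-plus-noise term), followed by Campbell's theorem to identify $\mathbb{E}[I]/P_t=\lambda\bar{G}\Lambda$ and the same elementary/incomplete-gamma integrals for $f_{\mathrm{Pr}}(R)=e^{-\varrho R}$. Your closed-form bookkeeping, including the merger of $d^{2-\alpha_{\mathrm{NLoS}}}/2$ with $d^{2-\alpha_{\mathrm{NLoS}}}/(\alpha_{\mathrm{NLoS}}-2)$ into the coefficient $\alpha_{\mathrm{NLoS}}d^{2-\alpha_{\mathrm{NLoS}}}/(2(\alpha_{\mathrm{NLoS}}-2))$, matches \eqref{Lambda_r} exactly and is in fact more explicit than what the paper records.
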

\begin{proof}
See Appendix B.
\end{proof}
 From \textbf{Theorem 2}, we find that as the transmit power grows large, the average rate is asymptotically lower bounded as  ${\overline{R}}^{\mathrm{L}}_1\rightarrow {\log _2}\left( 1 + \frac{{G_{\text{M}}^2\beta }r^{-{\overline{\alpha}}}}{{ { \lambda \bar{G} \Lambda} }}  \right)$. It is explicitly shown from \eqref{LB_average_rate_Alice} that the average rate between the typical transmitting node and its receiver is a decreasing function of transmitting node density, and increases with narrower beam due to the lower average interfering antenna gain. In addition, we have the following important corollary.
\begin{corollary}
Given a required average rate $\overline{R}_{\mathrm{th}}$ between the typical transmitting node and its receiver, it is achievable when
the transmitting node density in the mmWave ad hoc network satisfies
\begin{align}\label{corollary_1_1}
\lambda \leq \left(\frac{{G_{\text{M}}^2\beta }r^{-{\overline{\alpha}}}}{2^{\overline{R}_{\mathrm{th}}}-1}-\frac{N_o}{P_t}\right) {\bar G}^{-1} \Lambda^{-1}.
\end{align}
From \eqref{corollary_1_1}, we see that narrower beams allow mmWave ad hoc networks to accommodate more transmitting nodes.
\end{corollary}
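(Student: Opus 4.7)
The plan is to obtain Corollary~1 as an immediate algebraic consequence of the lower bound ${\overline{R}}^{\mathrm{L}}_1$ established in Theorem~2. Because the true average rate satisfies ${\overline{R}} \ge {\overline{R}}^{\mathrm{L}}_1$, any density $\lambda$ that enforces ${\overline{R}}^{\mathrm{L}}_1 \ge \overline{R}_{\mathrm{th}}$ is automatically sufficient to guarantee that the target rate is achievable. I would therefore impose this sufficient condition and then solve for $\lambda$.

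Concretely, substituting the closed-form lower bound \eqref{LB_average_rate_Alice}, the requirement becomes
\begin{equation*}
{\log_2}\!\left(1 + \frac{G_{\text{M}}^2 \beta\, r^{-{\overline{\alpha}}}}{\lambda \bar{G} \Lambda + \tfrac{N_o}{P_t}}\right) \ge \overline{R}_{\mathrm{th}}.
\end{equation*}
Exponentiating both sides yields ${G_{\text{M}}^2 \beta\, r^{-{\overline{\alpha}}}}/({\lambda \bar{G} \Lambda + N_o/P_t}) \ge 2^{\overline{R}_{\mathrm{th}}} - 1$. Since $\bar{G}$ and $\Lambda$ are strictly positive constants (the former is a convex combination of positive antenna gains; the latter is an integral of non-negative functions of positive distances), the denominator on the left is positive, so I can cross-multiply and isolate $\lambda$ to recover precisely the bound stated in \eqref{corollary_1_1}. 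The qualitative conclusion that narrower beams admit a larger admissible density then follows at once from the observation that narrowing $\theta$ shrinks $\bar{G}$, and $\bar{G}$ appears only in the denominator of the right-hand side.

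The main obstacle here is essentially nothing, as the derivation is a one-line manipulation of the lower bound from Theorem~2. The only subtlety worth flagging is a feasibility condition: the right-hand side of \eqref{corollary_1_1} is non-negative only when ${G_{\text{M}}^2 \beta\, r^{-{\overline{\alpha}}}}/({2^{\overline{R}_{\mathrm{th}}} - 1}) > {N_o}/{P_t}$, i.e.\ when the target rate is already achievable in the noise-limited, interference-free limit. If this inequality fails, no $\lambda \ge 0$ satisfies the requirement, which correctly signals that the target $\overline{R}_{\mathrm{th}}$ is simply infeasible for the given transmit power and noise floor irrespective of the node density.
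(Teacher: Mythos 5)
Your proposal is correct and matches the paper's (implicit) derivation exactly: the corollary is obtained by imposing ${\overline{R}}^{\mathrm{L}}_1 \ge \overline{R}_{\mathrm{th}}$ on the lower bound \eqref{LB_average_rate_Alice} from Theorem~2 and solving the resulting inequality for $\lambda$, with achievability guaranteed since ${\overline{R}} \ge {\overline{R}}^{\mathrm{L}}_1$. Your added remark on the feasibility condition (the right-hand side being non-negative only when the noise-limited rate already meets the target) is a sensible observation the paper leaves unstated.
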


We next derive the average rate between the typical transmitting node and the most malicious eavesdropper, which is given by the following theorem.
\begin{theorem}\label{theorem_2}
The exact average rate between the typical transmitting node and the most malicious eavesdropper is given by
\begin{align}\label{Eve_average_rate}
{\overline{R}}_{e^{*}}= \frac{1}{{\ln 2}}\int_0^\infty  {\frac{{\left( {1 - \mathcal{P}_1\left( {x} \right)\mathcal{P}_2\left( {x} \right)} \right)}}{{1 + x}}dx},
\end{align}
where $\mathcal{P}_1\left( {x} \right)$ and $\mathcal{P}_2\left( {x} \right)$ are given  {in \eqref{PP_1} and \eqref{PP_2} with    ${\rm{\mathbf{1}}}\left(A\right)$ representing the indicator function that returns one if the condition $A$ is satisfied.}
\begin{figure*}[!t]
\normalsize
\begin{align}\label{PP_1}
\mathcal{P}_1\left( {x} \right)=\exp \left\{   - 2\pi {\lambda _e} \int_0^\infty  {{f_{\Pr }}({r_e})} {r_e} \sum\limits_{
  \ell,n  \in \left\{ {{\text{M}},{\text{m}}} \right\} }  {\rm{\mathbf{1}}}\left(  {\max \{ {r_e},d\}} < \big(\frac{{P_t}{G_\ell }{G_n^e}\beta}{x \sigma_e^2}\big)^{\frac{1}{{\alpha _{\mathrm{LoS}}}}}\right) {{{\Pr }_{\ell n}}} d{r_e} \right\}
\end{align}
\hrulefill 
\begin{align}\label{PP_2}
\mathcal{P}_2\left( {x} \right)=\exp \left\{   - 2\pi {\lambda _e}  \int_0^\infty (1-{{f_{\Pr }}({r_e})}) {r_e} \sum\limits_{
  \ell,n  \in \left\{ {{\text{M}},{\text{m}}} \right\} }  {\rm{\mathbf{1}}}\left(  {\max \{ {r_e},d\}} < \big(\frac{{P_t}{G_\ell }{G_n^e}\beta}{x \sigma_e^2}\big)^{\frac{1}{{\alpha _{\mathrm{NLoS}}}}}\right) {{{\Pr }_{\ell n}}} d{r_e} \right\}
\end{align}
\hrulefill 
\end{figure*}
\end{theorem}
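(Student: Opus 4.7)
The plan is to start from the definition $\overline{R}_{e^*} = \mathbb{E}\bigl[\log_2(1+\gamma_{e^*})\bigr]$ and convert the expectation of a logarithm into an integral of the complementary CDF of $\gamma_{e^*}$. Specifically, I would invoke the identity
\begin{equation*}
\mathbb{E}\bigl[\ln(1+Y)\bigr]=\int_0^\infty \frac{\mathbb{P}(Y>x)}{1+x}\,dx
\end{equation*}
valid for any non-negative random variable $Y$, so that it only remains to characterize $\mathbb{P}(\gamma_{e^*}\leq x)=\mathcal{P}_1(x)\mathcal{P}_2(x)$.

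Next, because $\gamma_{e^*}$ is the maximum over the eavesdropper PPP $\Phi_e$, the event $\{\gamma_{e^*}\leq x\}$ coincides with the void event that no eavesdropper $e\in\Phi_e$ satisfies $P_t G_e L(|X_e|)/\sigma_e^2>x$. I would then apply the void-probability formula for a PPP: $\mathbb{P}(\Phi_e(\mathcal{A})=0)=\exp(-\mu(\mathcal{A}))$, where the measure $\mu$ is obtained by integrating the per-eavesdropper probability that it lies in the "bad" region. Using the independent thinning into LoS eavesdroppers (with retention probability $f_{\Pr}(r_e)$) and NLoS eavesdroppers (with retention probability $1-f_{\Pr}(r_e)$), the void probability factors as $\mathcal{P}_1(x)\mathcal{P}_2(x)$.

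For each retained point at distance $r_e$, I would invert the condition $P_t G_e\,\beta(\max\{r_e,d\})^{-\alpha}/\sigma_e^2 > x$ to obtain $\max\{r_e,d\}<(P_t G_\ell G_n^e\beta/(x\sigma_e^2))^{1/\alpha}$, where $(\ell,n)$ ranges over the four antenna-gain combinations from \eqref{array_gain_pattern_eavesdropper} weighted by $\Pr_{\ell n}$. This inversion is precisely where the indicator $\mathbf{1}(\cdot)$ in \eqref{PP_1} and \eqref{PP_2} comes from, with $\alpha=\alpha_{\mathrm{LoS}}$ in $\mathcal{P}_1$ and $\alpha=\alpha_{\mathrm{NLoS}}$ in $\mathcal{P}_2$. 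Averaging over the four discrete gain outcomes yields the finite sum $\sum_{\ell,n\in\{\mathrm{M},\mathrm{m}\}}\Pr_{\ell n}\,\mathbf{1}(\cdot)$ inside the exponent, and integrating $r_e$ in polar form (with Jacobian $r_e$ from $2\pi\lambda_e r_e\,dr_e$) gives the stated expressions for $\mathcal{P}_1(x)$ and $\mathcal{P}_2(x)$.

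The main obstacle I anticipate is purely bookkeeping rather than conceptual: handling the \emph{three} sources of randomness in $\gamma_e$ — the spatial location $|X_e|$, the LoS/NLoS indicator, and the random antenna gain $G_e$ — while keeping the factorization of the void probability clean. The key point I would emphasize is that, conditional on the transmitting node's orientation, the antenna-gain assignment $G_e$ for different eavesdroppers can be treated as an independent mark (since each eavesdropper's angle relative to the transmit beam is independent), which combined with the standard independent-thinning argument for LoS/NLoS is what legitimizes writing the compound void probability as the clean product $\mathcal{P}_1(x)\mathcal{P}_2(x)$ appearing in \eqref{Eve_average_rate}.
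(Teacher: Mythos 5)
Your proposal is correct and follows essentially the same route as the paper's Appendix C: the CCDF integral identity for $\mathbb{E}[\log_2(1+\gamma_{e^*})]$, independent thinning into LoS/NLoS eavesdropper processes to factor the CDF of the maximum as $\mathcal{P}_1(x)\mathcal{P}_2(x)$, and then the probability generating functional of the PPP (your void-probability formulation of the thinned ``dangerous eavesdropper'' process is the same computation) combined with averaging the indicator over the four discrete antenna-gain outcomes. No gaps.
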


\begin{proof}
See Appendix C.
\end{proof}

Substituting \eqref{R_average_rate} and \eqref{Eve_average_rate} into \eqref{secrecy_rate}, we can thus evaluate the average achievable secrecy rate in this network.

{{\subsection{Simplified LoS MmWave Model}
The aforementioned analysis is derived by considering an arbitrary LoS probability, which is general. In this subsection, we employ a simplified LoS mmWave model, as mentioned in~\cite{Tianyang_arxiv2014,J_Park_2016}. In this model, the mmWave link is LoS if the distance for a typical transmitting node-receiver is not larger than the maximum LoS distance $\mathrm{D}_{\mathrm{LoS}}$, and otherwise it is outage. When an LoS link between a typical transmitting node and its receiver is established (i.e., $r<\mathrm{D}_{\mathrm{LoS}}$), the exact average rate between the typical transmitting node and its intended receiver given in \textbf{Theorem 1} can be simplified as
\begin{align}\label{LoS_Rate_121}
{\hat{R}}=\frac{1}{\ln2}\int_0^\infty  {\frac{1}{{z}}(1 -e^{- z{{P_t}G_{\text{M}}^2 L\left(r\right)}})\hat{\Xi}_2(z) {e^{ -z\sigma_o^2}}dz},
\end{align}
where $\hat{\Xi}_2(z)$ is calculated as
\begin{align}\label{hat_Xi_2}
&\hat{\Xi}_2(z)= \exp\Bigg\{ - 2\pi \lambda \bigg[\frac{\mathrm{D}_{\mathrm{LoS}}^2}{2}-\sum\limits_{\ell, k  \in \left\{ {{\text{M}},{\text{m}}} \right\}} {\Pr }_{\ell k }
\Big( \nonumber\\
& \frac{d^2}{2} e^{- z{{P_t}G_{\text{M}}^2\beta d^{ - {\alpha_\mathrm{LoS}}}}} + \alpha_\mathrm{LoS}^{-1}(z{P_t}{G_\ell G_k}\beta)^{2/\alpha_\mathrm{LoS}} \nonumber\\
&\Big(\Gamma\Big(-\frac{2}{\alpha_\mathrm{LoS}},z{P_t}{G_\ell G_k}\beta\mathrm{D}_{\mathrm{LoS}}^{-\alpha_\mathrm{LoS}}\Big) \nonumber\\
& -\Gamma\Big(-\frac{2}{\alpha_\mathrm{LoS}},z{P_t}{G_\ell G_k}\beta d^{-\alpha_\mathrm{LoS}}\Big) \Big)
\Big)\bigg]  \Bigg\}.
\end{align}
Here,  $\Gamma\left(\cdot,\cdot\right)$ is the upper incomplete gamma function~\cite[(8.350)]{gradshteyn}.

It is explicitly shown from \eqref{LoS_Rate_121} that ${\hat{R}}$ is a decreasing function of $\lambda$, since adding more transmitting nodes results in larger interference.

Likewise, the exact average rate between the typical transmitting node and the most malicious eavesdropper given in \textbf{Theorem 3} can be simplified as
\begin{align}\label{Eve_average_rate_LoS_mmWave}
{\overline{R}}_{e^{*}}= \frac{1}{{\ln 2}}\int_0^\infty  {\frac{1-\exp\left(-2\pi {\lambda _e} \hat{F}_e\left( {x} \right) \right)}{{1 + x}}dx},
\end{align}
where the cumulative distribution function is given by
\begin{align}\label{Eves_SINR_11}
\hat{F}_e\left( {x} \right)=\sum\limits_{
  \ell,n  \in \left\{ {{\text{M}},{\text{m}}} \right\} }\left( {\rm{\mathbf{1}}}\left( d < \eta\left(G_\ell,G_n^e,x\right)\right)
  \frac{d^2}{2} + \frac{\varrho^2-d^2}{2} \right){{{\Pr }_{\ell n}}}
\end{align}
with $\eta\left(G_\ell,G_n^e,x\right)=\big(\frac{{P_t}{G_\ell }{G_n^e}\beta}{x \sigma_e^2}\big)^{\frac{1}{{\alpha _{\mathrm{LoS}}}}}$ and $\varrho=\min\left(\mathrm{D}_{\mathrm{LoS}},\eta\left(G_\ell,G_n^e,x\right)\right)$.

It is explicitly shown from \eqref{Eve_average_rate_LoS_mmWave} that ${\overline{R}}_{e^{*}}$ is an increasing function of $\lambda _e$, which means that the exact average rate between the typical transmitting node and the most malicious eavesdropper increases with the number of eavesdroppers.

Substituting \eqref{LoS_Rate_121} and \eqref{Eve_average_rate_LoS_mmWave} into \eqref{average_Secrecy_rate}, we can obtain the average achievable secrecy rate.
}}
\subsection{Uniform Linear Array}
We proceed to evaluate the secrecy performance when all the nodes in this networks are equipped with ULA. Assume that the number of antennas possessed by each eavesdropper and the transmitting node are denoted by $N_e$ and $N$, respectively,  and each receiver has the same number of antennas as its transmitting node.

For ULA configuration with $q$ antennas, the elements are placed along the y-axis of the propagation plane with $\Delta \tau$ spacing. Hence,  the array steering and response vectors for the transmitting node and its receiver are written as~\cite{moraitis2007indoor}
{{\begin{align}\label{18_tr_vect}
&\hspace{-0.3 cm}\textbf{a}_t (\varphi,q) ={\left[
  {1,~}{{e^{ - j\frac{{2\pi }}{\omega }\Delta \tau \sin ({\varphi })}},}{ \ldots,~}{{e^{ - j\frac{{2\pi }}{ \omega }(q- 1)\Delta \tau \sin ({\varphi })}}}
 \right]^T}
\end{align}
 and
\begin{align}\label{19_tr_vect}
&\hspace{-0.3 cm}\textbf{a}_r (\xi,q) ={\left[
  {1,~}{{e^{ - j\frac{{2\pi }}{\omega }\Delta \tau \sin ({\xi })}},}{ \ldots,~}{{e^{ - j\frac{{2\pi }}{\omega }(q - 1)\Delta \tau \sin ({\xi})}}}
 \right]^T},
\end{align}}}
respectively, where $\omega$ is the wavelength, $\varphi \sim U(0, 2 \pi)$ and $\xi \sim U(0, 2 \pi)$ are the azimuth angle of departure (AoD) and angle of arrival (AoA), respectively,  and $\left(\cdot\right)^T$ denotes transpose. The channel model is established as $\textbf{H}=\sqrt{L(R)}{\bf{ A}}\left(\xi_r,\varphi_t\right)$ with the ULA steering matrix ${\bf{ A}}\left(\xi_r,\varphi_t\right)=\textbf{a}_r(\xi_r,q) \textbf{a}_t^H (\varphi_t,q)$, {where $\left(\cdot\right)^H$ is the conjugate transpose.}

\begin{figure*}[!t]
\normalsize
\setcounter{equation}{25}\begin{align}\label{P_1_ULA}
\mathcal{P}_1^{\mathrm{ULA}}\left( {x} \right)=\exp \left\{   - 2\pi {\lambda _e}  \int_0^\infty \int_0^{2\pi} {\rm{\mathbf{1}}}\left(  {\max \{ {r_e},d\}} < \big(\frac{{P_t}{G_e(\varphi_{t_{e,o}})}\beta}{x \sigma_e^2}\big)^{\frac{1}{{\alpha _{\mathrm{LoS}}}}}\right)  \frac{{{f_{\Pr }}({r_e})}}{2\pi} {r_e} d{\varphi_{t_{e,o}}} d{r_e}  \right\}
\end{align}
\hrulefill 
\begin{align}\label{P_2_ULA}
\mathcal{P}_2^{\mathrm{ULA}}\left( {x} \right)=\exp \left\{   - 2\pi {\lambda _e} \int_0^\infty   \int_0^{2\pi} {\rm{\mathbf{1}}}\left(  {\max \{ {r_e},d\}} < \big(\frac{{P_t}{G_e(\varphi_{t_{e,o}})}\beta}{x \sigma_e^2}\big)^{\frac{1}{{\alpha _{\mathrm{NLoS}}}}}\right) \frac{1-{{f_{\Pr }}({r_e})}}{2\pi} {r_e} d{\varphi_{t_{e,o}}}d{r_e} \right\}
\end{align}
\hrulefill 
\end{figure*}

We consider that  matched filter (MF) beamforming is adopted at all the nodes including eavesdroppers, the transmitting nodes and their receivers for maximizing the received signal power. Note that MF is the optimal beamforming for eavesdroppers, since interference is negligible at the eavesdroppers. Hence, the antenna gain for a typical transmitting node seen by its receiver is
\setcounter{equation}{18}\begin{align}\label{eq:G_o}
\hspace{-0.4 cm} G_o&= \left| \frac{\textbf{a}_{r}^H (\xi_{r_o},N)}{\sqrt{N}} {\bf{ A}}\left(\xi_{r_o},\varphi_{t_o}\right)  \frac{\textbf{a}_{t}
(\varphi_{t_o},N)}{\sqrt{N}}  \right|^2 =N^2,
\end{align}
{{ and the antenna gain for an interferer $i$ seen by the typical receiver is
\begin{align}\label{eq:G_i_1_1}
& G_i= \left| \frac{\textbf{a}_{r}^H (\xi_{r_o},N)}{\sqrt{N}}  {\bf{ A}}\left(\xi_{r_{i,o}},\varphi_{t_{i,o}}\right)  \frac{\textbf{a}_{t}
(\varphi_{t_i},N)}{\sqrt{N}} \right|^2.
\end{align}
Based on \eqref{18_tr_vect} and \eqref{19_tr_vect}, after some manipulations, we have
\begin{align}\label{eq:G_i}
G_i= \frac{1}{N^2}\frac{{\big[1 - \cos (N {\mathcal K}_1(\xi_{r_{i,o}}))\big] \big[1 - \cos ({N}{{\mathcal K}_2}(\varphi_{t_{i,o}},{\varphi _{{t_i}}}))\big]}}{{\big[1 - \cos ({\mathcal K}_1(\xi_{r_{i,o}}))\big] \big[1 - \cos ({{\mathcal K}_2}(\varphi_{t_{i,o}},{\varphi _{{t_i}}}))\big]}},
\end{align}}}
where $ {\mathcal K}_1\left(\xi_{r_{i,o}}\right)= 2\pi \frac{{\Delta \tau }}{\omega }(\sin({\xi_{{r_o}}}) - \sin (\xi_{r_{i,o}})) $, $ {{\mathcal K}_2} \left(\varphi_{t_{i,o}},{\varphi _{{t_i}}}\right)= 2\pi \frac{{\Delta \tau }}{\omega }(\sin(\varphi_{t_{i,o}}) - \sin ({\varphi _{{t_i}}}))$.

Based on \textbf{Theorem 2}, the average rate between the typical transmitting node and its intended receiver is lower bounded as
\begin{align}\label{ULA_LB_rate}
{\overline{R}}^{\mathrm{L}}_\mathrm{ULA}= {\log _2}\left( 1 + \frac{{ N^2 \beta }r^{-{\overline{\alpha}}}}{{ { \lambda \bar{G} \Lambda_\mathrm{ULA }+\frac {N_o}{P_t} } }}  \right),
\end{align}
where $\Lambda_\mathrm{ULA }$ is given from \eqref{Lambda_1_1_1} with the average antenna gain
{{ \begin{align}\label{G_average_ULA_1_1}
&\bar{G}=\mathbb{E}\left[G_i\right]=\frac{1}{N^2} \mathbb{E}\left[ \frac{{1 - \cos (N {\mathcal K}_1(\xi_{r_{i,o}})) }}{{1 - \cos ({\mathcal K}_1(\xi_{r_{i,o}})) }}  \right] \times \nonumber\\
 & \mathbb{E}\left[\frac{1 - \cos ({N}{{\mathcal K}_2}(\varphi_{t_{i,o}},{\varphi _{{t_i}}}))} {1 - \cos ({{\mathcal K}_2}(\varphi_{t_{i,o}},{\varphi _{{t_i}}})) } \right].
\end{align}
Since the beam-direction of the typical node and each interferer is a uniform random variable on $\left[0,2\pi\right]$, we can further obtain \begin{align}\label{G_average_ULA}
&\bar{G}=\frac{1}{N^2} \int_0^{2\pi} \frac{{1 - \cos (N {\mathcal K}_1(\xi_{r_{i,o}})) }}{{1 - \cos ({\mathcal K}_1(\xi_{r_{i,o}})) }} \frac{1}{2\pi}d{\xi_{r_{i,o}}} \times \nonumber\\
& \int_0^{2\pi} \int_0^{2\pi} \frac{1 - \cos ({N}{{\mathcal K}_2}(\varphi_{t_{i,o}},{\varphi _{{t_i}}}))} {1 - \cos ({{\mathcal K}_2}(\varphi_{t_{i,o}},{\varphi _{{t_i}}})) } \frac{1}{4\pi^2}  d\varphi_{t_{i,o}}  d{\varphi _{{t_i}}}.
\end{align}}}

Likewise, the antenna gain $G_e$ seen from the eavesdropper $e \in {\Phi _e}$  is
\begin{align}\label{eve_gain_ULA}
G_e\left(\varphi_{t_{e,o}}\right)&= \left| \frac{\textbf{a}_{r}^H (\xi_{r_{e,o}},N_e)}{\sqrt{N}}  {\bf{ A}}\left(\xi_{r_{e,o}},\varphi_{t_{e,o}}\right)  \frac{\textbf{a}_{t}
(\varphi_{t_o},N)}{\sqrt{N}} \right|^2 \nonumber\\
&=\left(\frac{N_e}{N}\right)^2\frac{{1 - \cos (N{{\mathcal K}_3}(\varphi_{t_{e,o}}))}}{{1 - \cos ({{\mathcal K}_3}(\varphi_{t_{e,o}}))}},
\end{align}
where $ {\mathcal K}_3\left(\varphi_{t_{e,o}}\right) =2\pi \frac{{\Delta \tau }}{\omega }(\sin(\varphi_{t_{e,o}}) - \sin ({\varphi _{{t_o}}}))$. 
 From \eqref{eve_gain_ULA}, we find that increasing the number of antennas at the transmitting node decreases the antenna gain obtained by the eavesdroppers, which is helpful for degrading the signal strength at the eavesdroppers. Based on \textbf{Theorem 3}, the exact average rate ${\overline{R}}_{e^{*}}^{\mathrm{ULA}}$ between the typical transmitting node and the most malicious eavesdropper is given from  \eqref{Eve_average_rate} by interchanging $\mathcal{P}_1\left( {x} \right) \rightarrow \mathcal{P}_1^{\mathrm{ULA}}\left( {x} \right)$ and $\mathcal{P}_2\left( {x} \right) \rightarrow \mathcal{P}_2^{\mathrm{ULA}}\left( {x} \right) $, where $\mathcal{P}_1^{\mathrm{ULA}}\left( {x} \right)$ and $\mathcal{P}_2^{\mathrm{ULA}}\left( {x} \right)$ are given by \eqref{P_1_ULA} and \eqref{P_2_ULA}, respectively. Thus, by using ULA, the average achievable secrecy rate can at least reach
\setcounter{equation}{27}\begin{align}\label{ULA_secrecy_rate}
{\overline{R}}_{s,\mathrm{ULA}}^L=\left[{\overline{R}}^{\mathrm{L}}_\mathrm{ULA}-{\overline{R}}_{e^{*}}^{\mathrm{ULA}}\right]^+.
\end{align}

\section{Artificial Noise  Aided Transmission}
In this section, we evaluate the secrecy performance for the artificial noise aided transmission~\cite{yang2015safeguarding}. For this case, the total power per transmission is $P_t =P_S+P_A$, where the power allocated to the information signal is $P_S=\mu P_t$,  and the power allocated to the artificial noise is $P_A=(1-\mu)P_t$. Here, $\mu$ is the fraction of power assigned to the information signal. The effective antenna gain $G_i^{S}$ for the information signal of an interfering $i$ seen by the typical receiver is expressed as
\begin{align}\label{G_S_array_gain}
G_i^{S}=
\left\{ \begin{array}{l}
  G_\mathrm{M}^{S} G_\mathrm{M},\quad {\Pr}_{\mathrm{M}\mathrm{M}}^{S}{\text{ = }} \frac{\vartheta \theta}{\left(2\pi\right)^2}  ,\hfill \\
  G_\mathrm{M}^{S} G_\mathrm{m},\quad {\Pr}_{\mathrm{M}\mathrm{m}}^{S}{\text{ = }} \frac{{\vartheta  \left( {2\pi  - \theta } \right)}}{{\left(2\pi\right)^2}} ,\hfill \\
  G_\mathrm{m}^{S} G_\mathrm{M},\quad {\Pr}_{\mathrm{m}\mathrm{M}}^{S}{\text{ = }} \frac{{ {(2\pi  - \vartheta)}\theta }}{{{\left(2\pi\right)^2}}}  ,\hfill \\
  G_\mathrm{m}^{S} G_\mathrm{m},\quad {\Pr}_{\mathrm{m}\mathrm{m}}^{S}{\text{ = }}\frac{{(2\pi  - \vartheta) (2\pi  - \theta) }}{\left({2\pi }\right)^2},
\end{array} \right.
\end{align}
where $\vartheta$, $G_\mathrm{M}^S$ and $G_\mathrm{m}^S$ are the beamwidth of the main-lobe, main-lobe gain and back-lobe gain for the information signal of an interfering $i$, respectively. Likewise,  {the} effective antenna gain for the artificial noise of an interfering $i$ seen by the typical receiver is expressed as
\begin{align}\label{G_AN_array_gain}
G_i^{A}=
\left\{ \begin{array}{l}
  G_\mathrm{M}^{A} G_\mathrm{M},\quad {\Pr}_{\mathrm{M}\mathrm{M}}^{A}{\text{ = }} \frac{\varsigma \theta}{\left(2\pi\right)^2}  ,\hfill \\
  G_\mathrm{M}^{A} G_\mathrm{m},\quad {\Pr}_{\mathrm{M}\mathrm{m}}^{A}{\text{ = }} \frac{{\varsigma  \left( {2\pi  - \theta } \right)}}{{\left(2\pi\right)^2}} ,\hfill \\
  G_\mathrm{m}^{A} G_\mathrm{M},\quad {\Pr}_{\mathrm{m}\mathrm{M}}^{A}{\text{ = }} \frac{{ {(2\pi  - \varsigma)}\theta }}{{{\left(2\pi\right)^2}}}  ,\hfill \\
  G_\mathrm{m}^{A} G_\mathrm{m},\quad {\Pr}_{\mathrm{m}\mathrm{m}}^{A}{\text{ = }}\frac{{(2\pi  - \varsigma) (2\pi  - \theta) }}{\left({2\pi }\right)^2},
\end{array} \right.
\end{align}
where $\varsigma$, $G_\mathrm{M}^A$ and $G_\mathrm{m}^A$ are the beamwidth of the main-lobe, main-lobe gain and back-lobe gain for the artificial noise of an interfering $i$, respectively. The effective antenna gain $G_e^{S}$ and $G_e^{A}$  for the information signal and artificial noise of the typical transmitting node seen by the eavesdropper $e \in {\Phi _e}$ can be respectively given from \eqref{G_S_array_gain} and \eqref{G_AN_array_gain} by interchanging the parameters $G_\mathrm{M}\rightarrow G_\mathrm{M}^e$, $G_\mathrm{m}\rightarrow G_\mathrm{m}^e$ and $\theta \rightarrow \phi$.

{Since the beam of the artificial noise at the typical transmitting node will not be directed to the typical receiver, the artificial noise sent by the typical transmitting node has negligible effect on the typical receiver~\cite{yang2015safeguarding},} the SINR at the typical receiver is given by
 \begin{align}\label{SINR_typical_receiver_AN}
{\widetilde{\gamma}_o} = \frac{{{P_S} G_\mathrm{M}^{S} G_\mathrm{M}  L\left(r\right) }}{{\sum\nolimits_{i \in \Phi /o} \left(P_S G_i^{S}+P_A G_i^{A}\right) L\left(\left|X_i\right|\right) + {\sigma_o^2}}}.
\end{align}
The SINR at the most malicious eavesdropper is given by
\begin{align}\label{SINR_eavesdropping_AN}
{\widetilde{\gamma}_{{e^*}}} = \mathop {\max }\limits_{e \in {\Phi _e}} \left\{ {\frac{{{P_S} G_e^{S} L\left(\left|X_e\right|\right)}}{{{{{P_A} G_e^{A} L\left(\left|X_e\right|\right)}+ \sigma_e^2}}}} \right\}.
 \end{align}
Following \eqref{average_Secrecy_rate}, the average achievable secrecy rate for the artificial noise aided transmission is written as
\begin{align}\label{SC_AN}
\widetilde{R}_S = {\left[\widetilde{R}-\widetilde{R}_e^{*}\right]^ + },
\end{align}
where $\widetilde{R}=\mathbb{E}\left[{{\log }_2}\left( {1 + {\widetilde{\gamma}_o}} \right)\right]$  and $\widetilde{R}_e^{*}=\mathbb{E}\left[ {{\log }_2}\left( {1 + {\widetilde{\gamma}_{e^{*}}}} \right) \right]$, $\widetilde{R}$ and $\widetilde{R}_e^{*}$ are given by the following theorems.
\begin{theorem}
 The exact average rate for the artificial noise aided transmission between the typical transmitting node and its intended receiver is given by
\begin{align}\label{average_rate_AN_VI}
\widetilde{R}=\frac{1}{\ln2}\int_0^\infty  {\frac{1}{{z}}(1 - \widetilde{\Xi}_1(z))\widetilde{\Xi}_2(z) {e^{ -z \sigma_0^2}}dz},
\end{align}
where $\widetilde{\Xi}_1(z)$ and $\widetilde{\Xi}_2(z)$ are respectively given by \eqref{Xi_1_AN} and \eqref{Xi_2_AN} at the top of next page. In \eqref{Xi_2_AN}, $\Pr_{\rm{M}}=\frac{\theta}{2 \pi}$ and $\Pr_{\rm{m}}=1-\Pr_{\rm{M}}$.
\begin{figure*}[!t]
\normalsize
\begin{align}\label{Xi_1_AN}
\widetilde{\Xi}_1(z)=f_\mathrm{Pr}\left(r\right)e^{- z{{P_S} G_\mathrm{M}^{S} G_\mathrm{M} \beta \left( {\max {{\{ r,d\} }}} \right)^{ - {\alpha_\mathrm{LoS}}}}}
+(1-f_\mathrm{Pr}\left(r\right))e^{- z{{P_S} G_\mathrm{M}^{S} G_\mathrm{M}\beta \left( {\max {{\{ r,d\} }}} \right)^{ - {\alpha_\mathrm{NLoS}}}}}
\end{align}
\hrulefill 
\begin{align}\label{Xi_2_AN}
\widetilde{\Xi}_2(z)=\exp\Big( - 2\pi \lambda \int_0^\infty  {f_{{\rm{Pr}}}}\left( u \right) ( 1 -\widetilde{\Omega}_1(z,u)) udu-2\pi \lambda \int_0^\infty (1-  {f_{{\rm{Pr}}}}\left( u \right)) ( 1 -\widetilde{\Omega}_2(z,u)) udu\Big)
\end{align}
with
\begin{equation*} 
\left\{\begin{aligned}
\widetilde{\Omega}_1(z,u)&=\sum\nolimits_{\ell, \nu,k  \in \left\{ {{\text{M}},{\text{m}}} \right\}}  \dfrac{{\Pr }_{\ell k }^{S} {\Pr }_{\nu k}^{A}}{\Pr_{k}}\times
{{{e}}^{ - z({P_S}{G_\ell^{S} G_k}+{P_A}{G_\nu^{A} G_k})\beta {\left( {\max {{\{ u,d\} }}} \right)^{ - {\alpha_{{\rm{LoS}}}}}}}}\\
\widetilde{\Omega}_2(z,u)&=\sum\nolimits_{\ell, \nu,k  \in \left\{ {{\text{M}},{\text{m}}} \right\}}  \dfrac{{\Pr }_{\ell k }^{S} {\Pr }_{\nu k}^{A}}{\Pr_{k}}\times
{{{e}}^{ - z({P_S}{G_\ell^{S} G_k}+{P_A}{G_\nu^{A} G_k})\beta {\left( {\max {{\{ u,d\} }}} \right)^{ - {\alpha_{{\rm{NLoS}}}}}}}}
\end{aligned}\right.
\end{equation*}
\hrulefill 
\end{figure*}
\begin{figure*}[!t]
\normalsize
\setcounter{equation}{40}\begin{align}\label{PP_3}
\widetilde{\mathcal{P}}_1\left( {x} \right)=&\exp \Big\{   - 2\pi {\lambda _e} \int_0^\infty {{f_{\Pr }}({r_e})}  {r_e}  \sum\nolimits_{\ell, \nu,n  \in \left\{ {{\text{M}},{\text{m}}} \right\}}  \dfrac{{\Pr }_{\ell n }^{S} {\Pr }_{\nu n}^{A}}{\Pr_{n}^e}{\rm{\mathbf{1}}}\left(  {\max \{ {r_e},d\}}< \big(\frac{{P_S}{G_\ell^S}{G_n^e}\beta-{P_A}{G_\nu^A}{G_n^e}\beta x}{x \sigma_e^2}\big)^{\frac{1}{{\alpha _{\mathrm{LoS}}}}}\right)  d{r_e} \Big\}
\end{align}
\hrulefill 
\begin{align}\label{PP_4}
\hspace{-0.3 cm}\widetilde{\mathcal{P}}_2\left( {x} \right)=&\exp \Big\{   - 2\pi {\lambda _e}  \int_0^\infty  (1-{{f_{\Pr }}({r_e})}) {r_e} \sum\nolimits_{\ell, \nu,n  \in \left\{ {{\text{M}},{\text{m}}} \right\}}  \dfrac{{\Pr }_{\ell n }^{S} {\Pr }_{\nu n}^{A}}{\Pr_{n}^e} {\rm{\mathbf{1}}}\left(   {\max \{ {r_e},d\}} < \big(\frac{{P_S}{G_\ell^S}{G_n^e}\beta-{P_A}{G_\nu^A}{G_n^e}\beta x}{x \sigma_e^2}\big)^{\frac{1}{{\alpha _{\mathrm{NLoS}}}}}\right) d{r_e} \Big\}
\end{align}
\hrulefill 
\end{figure*}
\end{theorem}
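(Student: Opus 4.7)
The plan is to mirror the Laplace-transform derivation of \textbf{Theorem 1}, while carefully tracking the fact that every interferer now contributes two beam-pattern marks, one for the information signal $G_i^{S}$ and one for the artificial noise $G_i^{A}$, which are coupled through the typical receiver's lobe.

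First I would apply the Frullani-type identity $\ln(1+S/Y)=\int_{0}^{\infty}z^{-1}(1-e^{-zS})e^{-zY}\,dz$ with $S=P_S G_{\mathrm{M}}^{S}G_{\mathrm{M}} L(r)$ and $Y=I+\sigma_o^2$, where $I=\sum_{i\in\Phi/o}(P_S G_i^{S}+P_A G_i^{A})L(|X_i|)$. Taking expectations, exchanging the order of integration by Fubini, and using the independence of the typical link from the PPP-generated interference yields
\[
\widetilde{R}=\frac{1}{\ln 2}\int_{0}^{\infty}\frac{1}{z}\bigl(1-\mathbb{E}[e^{-zS}]\bigr)\,\mathbb{E}[e^{-zI}]\,e^{-z\sigma_o^2}\,dz,
\]
so identifying $\widetilde{\Xi}_1(z)=\mathbb{E}[e^{-zS}]$ and $\widetilde{\Xi}_2(z)=\mathbb{E}[e^{-zI}]$ produces the claimed form \eqref{average_rate_AN_VI}.

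For $\widetilde{\Xi}_1(z)$ the computation is short: since the typical transmitter and its intended receiver both aim their main lobes at each other, $G_{\mathrm{M}}^{S}G_{\mathrm{M}}$ is deterministic, and only the LoS indicator of the typical link is random. Conditioning on LoS with probability $f_{\mathrm{Pr}}(r)$ and on NLoS with the complementary probability yields the two-term combination in \eqref{Xi_1_AN}.

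The delicate step is $\widetilde{\Xi}_2(z)$. I would split $\Phi$ into independent LoS and NLoS thinnings with intensities $\lambda f_{\mathrm{Pr}}(u)$ and $\lambda(1-f_{\mathrm{Pr}}(u))$, and apply the PGFL of the PPP to each. The main obstacle is evaluating the per-point mark expectation
\[
\widetilde{\Omega}_j(z,u)=\mathbb{E}_{G}\!\left[e^{-z(P_S G_i^{S}+P_A G_i^{A})\beta\max(u,d)^{-\alpha_j}}\right],
\]
because $G_i^{S}$ and $G_i^{A}$ are \emph{not} independent: both contain the common typical-receiver-side lobe $G_k\in\{G_{\mathrm{M}},G_{\mathrm{m}}\}$. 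To decouple them I would condition on $G_k$; under this conditioning, the information-beam direction and the artificial-noise-beam direction of interferer $i$ are independent, so the joint probability factors as $\Pr(G_i^{S}=G_{\ell}^{S}G_k,\,G_i^{A}=G_{\nu}^{A}G_k)=\Pr^{S}_{\ell k}\Pr^{A}_{\nu k}/\Pr_k$. Summing over $\ell,\nu,k\in\{\mathrm{M},\mathrm{m}\}$ reproduces the expressions for $\widetilde{\Omega}_1$ and $\widetilde{\Omega}_2$ accompanying \eqref{Xi_2_AN}, completing the derivation.
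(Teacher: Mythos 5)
Your proposal is correct and follows essentially the same route as the paper, whose own proof of this theorem is simply a deferral to the argument of \textbf{Theorem 1} (Appendix A): the Hamdi integral identity, independence of the desired-signal and interference terms, LoS/NLoS thinning, and the PPP probability generating functional. The one genuinely new ingredient---decoupling the correlated marks $G_i^{S}$ and $G_i^{A}$ by conditioning on the shared receiver-side lobe $G_k$, which yields the weights ${\Pr}^{S}_{\ell k}{\Pr}^{A}_{\nu k}/\Pr_k$---is handled correctly and is exactly what the paper's stated $\widetilde{\Omega}_j$ implicitly assumes.
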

\begin{proof}
It can be proved by following a similar approach shown in the  {\textbf{Theorem} 1}.
\end{proof}

Using the similar approach shown in the Appendix B, the exact average rate given in \eqref{average_rate_AN_VI} can be lower bounded as a simple expression, which is given by the following theorem.
\begin{theorem}\label{theorem_4_1}
The lower bound of the average rate $\widetilde{R}$ is
\setcounter{equation}{36} \begin{align}\label{LB_average_rate_Alice_AN}
\widetilde{R}^{\mathrm{L}}_1= {\log _2}\left( 1 + \frac{{  G_{\text{M}}^S  G_{\text{M}} \beta }r^{-{\overline{\alpha}}}}{{ {\lambda  \widetilde{\Lambda}  + \frac{N_o}{\mu P_t} } }}  \right),
\end{align}
where  $\widetilde{\Lambda}$ is
\begin{align}\label{Lambda_AN_1_1}
&\widetilde{\Lambda} =\left( {{{\bar G}_S} +\frac{1-\mu}{\mu}{{\bar G}_A}} \right)\beta 2\pi   \nonumber \\
 & \times \Big( \int_{\text{0}}^d {({d^{ - {\alpha _{{\text{LoS}}}}}} - {d^{ - {\alpha _{{\text{NLoS}}}}}})} r{f_{{\text{Pr}}}}\left( r \right) + {d^{ - {\alpha _{{\text{NLoS}}}}}}rd{r}  \nonumber \\
 &\quad~~+\int_d^\infty  {({r^{1 - {\alpha _{{\text{LoS}}}}}} - {r^{1 - {\alpha _{{\text{NLoS}}}}}})} {f_{{\text{Pr}}}}\left( r \right) + {r^{1 - {\alpha _{{\text{NLoS}}}}}}dr \Big).
\end{align}
with
\begin{align*}
{{\bar G}_S } = \sum\nolimits_{\ell ,k \in \left\{ {{\text{M}},{\text{m}}} \right\}} {{G_{\ell }^S G_k}\text{Pr}_{\ell k}^S},
{{\bar G}_A } =  \sum\nolimits_{\nu ,k \in \left\{ {{\text{M}},{\text{m}}} \right\}} {{G_{\nu }^A G_k}\text{Pr}_{\nu k}^A}.
\end{align*}
\end{theorem}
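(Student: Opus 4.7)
The plan is to closely mirror the proof of \textbf{Theorem \ref{theorem_1_1}} given in Appendix B, modifying only what changes when artificial noise is injected: each interferer now produces two contributions, a data-stream term with gain $G_i^S$ and power $P_S$, and an artificial-noise term with gain $G_i^A$ and power $P_A$. Starting from $\widetilde{R} = \mathbb{E}[\log_2(1 + \widetilde{\gamma}_o)]$ with $\widetilde{\gamma}_o$ as in~\eqref{SINR_typical_receiver_AN}, I would invoke Jensen's inequality on the map $y \mapsto \log_2(1 + c/y)$, which is convex in $y > 0$, to pull the expectation of the aggregate interference-plus-noise into the denominator, yielding
\begin{align*}
\widetilde{R} \geq \mathbb{E}_{\mathrm{direct}}\left[\log_2\left(1 + \frac{P_S G_{\mathrm{M}}^S G_{\mathrm{M}} L(r)}{\mathbb{E}[I_{\mathrm{tot}}] + \sigma_o^2}\right)\right],
\end{align*}
where $I_{\mathrm{tot}} = \sum_{i \in \Phi/o}(P_S G_i^S + P_A G_i^A) L(|X_i|)$ and the outer expectation is only over the LoS/NLoS state of the direct link.

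Next I would evaluate $\mathbb{E}[I_{\mathrm{tot}}]$ by Campbell's theorem for the PPP $\Phi$. Because the data-stream and artificial-noise antenna gains are independent of the interferer positions, and by thinning each interferer into its LoS/NLoS contribution with probabilities $f_{\Pr}(u)$ and $1-f_{\Pr}(u)$, the mean interference decomposes as
\begin{align*}
\mathbb{E}[I_{\mathrm{tot}}] = \lambda (P_S \bar{G}_S + P_A \bar{G}_A)\, \beta \cdot 2\pi \int_0^\infty \left[f_{\Pr}(u)\max\{u,d\}^{-\alpha_{\mathrm{LoS}}} + (1-f_{\Pr}(u))\max\{u,d\}^{-\alpha_{\mathrm{NLoS}}}\right] u\, du,
\end{align*}
where $\bar{G}_S$ and $\bar{G}_A$ are the expected interfering antenna gains tallied from~\eqref{G_S_array_gain}--\eqref{G_AN_array_gain}. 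Splitting the radial integral at $u=d$ and pulling $P_S = \mu P_t$ out of the whole SINR expression turns the denominator into $\mu P_t(\lambda \widetilde{\Lambda} + N_o/(\mu P_t))$ with $\widetilde{\Lambda}$ precisely as in~\eqref{Lambda_AN_1_1}, thanks to the identity $P_S \bar{G}_S + P_A \bar{G}_A = \mu P_t\bigl(\bar{G}_S + \tfrac{1-\mu}{\mu}\bar{G}_A\bigr)$.

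Finally, I would dispatch the LoS/NLoS randomness of the direct link via a second Jensen application. The map $x \mapsto \log_2(1 + c e^{x})$ is convex in $x$, since its second derivative is $ce^x/\bigl((1+ce^x)^2 \ln 2\bigr) \geq 0$. Setting $x = -\alpha \ln r$ and noting that $\mathbb{E}[\alpha] = f_{\Pr}(r)\alpha_{\mathrm{LoS}} + (1-f_{\Pr}(r))\alpha_{\mathrm{NLoS}} = \bar{\alpha}$, Jensen's inequality gives
\begin{align*}
\mathbb{E}_\alpha\bigl[\log_2(1 + c\, r^{-\alpha})\bigr] \geq \log_2(1 + c\, r^{-\bar{\alpha}}),
\end{align*}
which supplies the $r^{-\bar{\alpha}}$ factor in~\eqref{LB_average_rate_Alice_AN}. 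Combining the three bounds produces the announced $\widetilde{R}^{\mathrm{L}}_1$.

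The main obstacle is the bookkeeping needed to verify that the four antenna-gain pairings in~\eqref{G_S_array_gain} and~\eqref{G_AN_array_gain} collapse cleanly into the compact averages $\bar{G}_S$ and $\bar{G}_A$ inside Campbell's formula, and to keep the $\mu$ factors straight when factoring $P_S = \mu P_t$ out of the common denominator; beyond that, the derivation is a transparent extension of Appendix B.
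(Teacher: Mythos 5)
Your proposal is correct and arrives at exactly the closed form in \eqref{LB_average_rate_Alice_AN}, but it packages the convexity arguments differently from the route the paper intends. The paper's proof of this theorem is literally the instruction to repeat Appendix~B, whose starting point is the lower bound $\mathbb{E}[\log_2(1+\gamma)]\ge\log_2\bigl(1+e^{\mathbb{E}[\ln\gamma]}\bigr)$ imported from \cite{Lifeng_massiveMIMO} (Jensen via the convexity of $t\mapsto\log_2(1+e^t)$); the exponent $\mathbb{E}[\ln\gamma]$ then splits additively, the signal term is evaluated \emph{exactly} as $\ln(P_S G_{\mathrm{M}}^S G_{\mathrm{M}}\beta)-\overline{\alpha}\ln r$ by the law of total expectation, and the interference term is bounded by a second Jensen step on $-\ln(\cdot)$ followed by Campbell's theorem. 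You instead apply Jensen twice directly to the rate: once in the denominator using the convexity of $y\mapsto\log_2(1+c/y)$ (conditioning on the direct link's LoS/NLoS state, which needs the same independence assumption the paper invokes), and once on the random path-loss exponent using the convexity of $x\mapsto\log_2(1+ce^x)$. Both routes reduce to the same Campbell-theorem evaluation $\mathbb{E}[I_{\mathrm{tot}}]=\mu P_t\lambda\widetilde{\Lambda}$ and the same factoring of $P_S=\mu P_t$ out of numerator and denominator, and they land on the identical final expression; your version is self-contained (no external lemma), at the cost of verifying two convexity claims where the paper needs one plus an exact logarithmic moment. The bookkeeping you flag --- collapsing the gain pairings of \eqref{G_S_array_gain} and \eqref{G_AN_array_gain} into $\bar{G}_S$ and $\bar{G}_A$, and the identity $P_S\bar{G}_S+P_A\bar{G}_A=\mu P_t\bigl(\bar{G}_S+\tfrac{1-\mu}{\mu}\bar{G}_A\bigr)$ --- is handled correctly.
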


Based on  {\textbf{Theorem} 5}, we have the following important corollary.
\begin{corollary}
The required average rate $\widetilde{R}_{\mathrm{th}}$ between the typical transmitting node and its receiver can be achieved when
the transmitting node density satisfies
\begin{align}\label{corollary_1}
\lambda \leq \left(\frac{{  G_{\text{M}}^S  G_{\text{M}} \beta }r^{-{\overline{\alpha}}}}{2^{\widetilde{R}_{\mathrm{th}}}-1}-\frac{N_o}{\mu P_t}\right) {\widetilde{\Lambda}}^{-1}.
\end{align}

\end{corollary}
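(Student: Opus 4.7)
The proof is a direct algebraic consequence of Theorem 5, which provides the lower bound $\widetilde{R}^{\mathrm{L}}_1$ on the average rate $\widetilde{R}$ between the typical transmitting node and its intended receiver. Since $\widetilde{R} \geq \widetilde{R}^{\mathrm{L}}_1$, a sufficient condition for $\widetilde{R} \geq \widetilde{R}_{\mathrm{th}}$ is $\widetilde{R}^{\mathrm{L}}_1 \geq \widetilde{R}_{\mathrm{th}}$. The plan is therefore to substitute the closed-form expression for $\widetilde{R}^{\mathrm{L}}_1$ from \eqref{LB_average_rate_Alice_AN} into this inequality and solve for $\lambda$.

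First I would write down the sufficient condition
\begin{equation*}
{\log _2}\!\left( 1 + \frac{{  G_{\text{M}}^S  G_{\text{M}} \beta }r^{-{\overline{\alpha}}}}{{\lambda  \widetilde{\Lambda}  + \frac{N_o}{\mu P_t} }}  \right) \;\geq\; \widetilde{R}_{\mathrm{th}}.
\end{equation*}
Exponentiating both sides (with base $2$) and subtracting one gives
\begin{equation*}
\frac{{  G_{\text{M}}^S  G_{\text{M}} \beta }r^{-{\overline{\alpha}}}}{\lambda  \widetilde{\Lambda}  + \frac{N_o}{\mu P_t}} \;\geq\; 2^{\widetilde{R}_{\mathrm{th}}}-1.
\end{equation*}
Because $\widetilde{\Lambda}>0$ and $\lambda>0$, the denominator on the left is strictly positive, so I can cross-multiply without reversing the inequality, obtaining
\begin{equation*}
\lambda  \widetilde{\Lambda}  + \frac{N_o}{\mu P_t} \;\leq\; \frac{{  G_{\text{M}}^S  G_{\text{M}} \beta }r^{-{\overline{\alpha}}}}{2^{\widetilde{R}_{\mathrm{th}}}-1}.
\end{equation*}
Isolating $\lambda$ by subtracting $N_o/(\mu P_t)$ and dividing by $\widetilde{\Lambda}$ yields precisely the bound in \eqref{corollary_1}.

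There is essentially no technical obstacle here; the only thing worth remarking on is the implicit positivity requirement, namely that the right-hand side of \eqref{corollary_1} be nonnegative. This reduces to ${  G_{\text{M}}^S  G_{\text{M}} \beta }r^{-{\overline{\alpha}}}/(2^{\widetilde{R}_{\mathrm{th}}}-1) \geq N_o/(\mu P_t)$, i.e., the signal-to-noise ratio seen by the typical receiver in the interference-free limit must already support the target rate $\widetilde{R}_{\mathrm{th}}$; otherwise no density $\lambda \geq 0$ can guarantee the requirement and the statement is vacuous. Since the corollary is framed as a sufficient condition on $\lambda$, the claim follows immediately whenever this positivity holds, and the derivation above completes the proof.
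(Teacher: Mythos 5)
Your proof is correct and matches the paper's (implicit) derivation: the corollary is stated as a direct consequence of Theorem 5, obtained exactly as you do by imposing $\widetilde{R}^{\mathrm{L}}_1 \geq \widetilde{R}_{\mathrm{th}}$ on the closed-form lower bound and inverting for $\lambda$. Your remark on the positivity of the right-hand side is a reasonable extra caveat that the paper leaves implicit.
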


We next present the average rate between the typical transmitting node and the most malicious eavesdropper as follows.
\begin{theorem}\label{theorem_4}
The exact average rate for the artificial noise aided transmission between the typical transmitting node and the most malicious eavesdropper is given by
\begin{align}\label{Eve_average_rate_AN}
\widetilde{R}_e^{*}= \frac{1}{{\ln 2}}\int_0^\infty  {\frac{{\left( {1 - \widetilde{\mathcal{P}}_1\left( {x} \right)\widetilde{\mathcal{P}}_2\left( {x} \right)} \right)}}{{1 + x}}dx},
\end{align}
where $\widetilde{\mathcal{P}}_1\left( {x} \right)$ and $\widetilde{\mathcal{P}}_2\left( {x} \right)$ are respectively given by \eqref{PP_3} and \eqref{PP_4}. In \eqref{PP_3} and \eqref{PP_4}, $\Pr_{\rm{M}}^e=\frac{\phi}{2 \pi}$ and $\Pr_{\rm{m}}^e=1-\Pr_{\rm{M}}^e$.

\end{theorem}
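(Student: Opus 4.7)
The plan is to adapt the argument used in Theorem 3, with the additional complication that the artificial noise term $P_A G_e^{A} L(|X_e|)$ now appears in the denominator of $\widetilde{\gamma}_{e^{*}}$. First, I would apply the standard identity $\mathbb{E}[\log_2(1+Y)]=\frac{1}{\ln 2}\int_0^\infty \frac{\Pr(Y>x)}{1+x}\,dx$ for any nonnegative $Y$, which reduces the task to computing the complementary CDF $1-F_{\widetilde{\gamma}_{e^*}}(x)=1-\Pr(\max_{e\in\Phi_e}\widetilde{\gamma}_e\le x)$.

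Next, I would exploit the PPP structure of $\Phi_e$: the CDF of the maximum factorizes, and by the probability generating functional (PGFL) of a PPP it equals $\exp\bigl\{-\lambda_e\int_{\mathbb{R}^2}\Pr(\widetilde{\gamma}_e>x\mid r_e)\,dx_e\bigr\}$ after averaging over the independent antenna-pattern and blockage randomness. For each eavesdropper, I would condition on the triple $(G_\ell^S,G_\nu^A,G_n^e)$; since the transmitter's signal-beam direction, AN-beam direction, and eavesdropper's own main/back lobe are independent, the joint probability of the triple factors as $\Pr_\ell^S\Pr_\nu^A\Pr_n^e$, which can be rewritten as $\frac{\Pr^S_{\ell n}\Pr^A_{\nu n}}{\Pr^e_n}$ to match the notation already introduced in \eqref{G_S_array_gain} and \eqref{G_AN_array_gain}. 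The event $\widetilde{\gamma}_e>x$ then rearranges to
\begin{equation*}
L(|X_e|)\,\bigl(P_S G_\ell^S-xP_A G_\nu^A\bigr)G_n^e > x\sigma_e^2,
\end{equation*}
which, after substituting $L(|X_e|)=\beta(\max\{r_e,d\})^{-\alpha}$, yields the distance condition $\max\{r_e,d\}<\bigl(\tfrac{(P_S G_\ell^S-xP_A G_\nu^A)G_n^e\beta}{x\sigma_e^2}\bigr)^{1/\alpha}$. This is exactly the indicator appearing inside \eqref{PP_3} and \eqref{PP_4}; note it is automatically zero whenever $P_S G_\ell^S\le xP_A G_\nu^A$, so no case-splitting on the sign is needed.

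I would then separate the PPP $\Phi_e$ into two independent thinned PPPs of LoS and NLoS eavesdroppers via the blockage probabilities $f_{\Pr}(r_e)$ and $1-f_{\Pr}(r_e)$, using the corresponding path-loss exponents $\alpha_{\mathrm{LoS}}$ and $\alpha_{\mathrm{NLoS}}$. Applying the PGFL separately to each sub-PPP produces two independent exponential factors, namely $\widetilde{\mathcal{P}}_1(x)$ from the LoS tier and $\widetilde{\mathcal{P}}_2(x)$ from the NLoS tier, and their product gives $F_{\widetilde{\gamma}_{e^*}}(x)$. Substituting back into the log-integral identity yields \eqref{Eve_average_rate_AN}.

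The main obstacle will be Step 3, namely correctly bookkeeping the joint distribution of the three independent beamforming outcomes (signal direction, AN direction, eavesdropper orientation) and verifying that the ratio $\Pr^S_{\ell n}\Pr^A_{\nu n}/\Pr^e_n$ is indeed the correct joint probability mass — because both the information signal and the AN are transmitted from the \emph{same} typical node toward the \emph{same} eavesdropper, the eavesdropper's lobe orientation is shared across the signal and AN components and must not be double-counted, while the transmitter's signal and AN beams are independently oriented. Once this accounting is settled, the remaining steps are routine applications of the PGFL, exactly parallel to the proof of Theorem 3.
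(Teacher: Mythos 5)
Your proposal is correct and follows essentially the same route as the paper, whose proof of this theorem is simply a pointer to the argument of Appendix C (the non-AN eavesdropper rate): the log-integral identity for $\mathbb{E}[\log_2(1+Y)]$, LoS/NLoS thinning of $\Phi_e$, the PGFL giving the two exponential factors $\widetilde{\mathcal{P}}_1$ and $\widetilde{\mathcal{P}}_2$, and the law of total probability over the gain triple. Your rearrangement of the SINR event into the distance indicator with numerator $P_S G_\ell^S G_n^e\beta - P_A G_\nu^A G_n^e\beta x$ and your accounting of the joint mass $\Pr_\ell^S\Pr_\nu^A\Pr_n^e = \Pr^S_{\ell n}\Pr^A_{\nu n}/\Pr^e_n$ match \eqref{PP_3}--\eqref{PP_4} exactly.
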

\begin{proof}
It can be proved by following a similar approach shown in the \textbf{Theorem } 2.
\end{proof}

Substituting \eqref{average_rate_AN_VI} and \eqref{Eve_average_rate_AN} into \eqref{SC_AN}, we obtain the average achievable secrecy rate for the artificial noise aided transmission.

\section{Numerical Results}
Numerical results are presented to understand the impact of mmWave channel characteristics and large antenna array on the achievable secrecy rate. We assume that the LoS probability function is $f_\mathrm{Pr}\left(R\right)=e^{-\varrho R}$ with $1/{\varrho}= 141.4$ m~\cite{Tianyang_arxiv2014}. The mmWave bandwidth is BW = 2 GHz, the noise figure is Nf = 10 dB, the noise power is $\sigma_o^2=\sigma_e^2 = -174 + 10\log 10$(BW)$ + \mathrm{Nf}$ dBm, and the reference distance is $d = 1$.

We focus on the carrier frequency at 28 GHz, 38 GHz, 60 GHz, and 73GHz, in which their LoS and NLoS path loss exponents are shown in  Table \ref{tab:passloss}  based on the practical channel measurements~\cite{deng201528,rappaport201238}.
\begin{table}\footnotesize
\centering
\caption{Path loss exponent for mm-wave outdoor channels \cite{deng201528,rappaport201238}.}\label{tab:passloss}
\begin{tabular}{|c|c|c|c|c|}
\hline
Path loss exponent & 28GHz & 38 GHz & 60 GHz & 73 GHz \\
\hline
LOS & 2 & 2  & 2.25  & 2\\
\hline
Strongest NLOS & 3  & 3.71 & 3.76 & 3.4\\
\hline
\end{tabular}
\end{table}
\begin{table}\footnotesize
\centering
\caption{Antenna Pattern \cite{venugopal2015interference}.}\label{tab:gain}
\begin{tabular}{|c|c|}
\hline
Number	of	antenna	 elements & $ N $ \\
\hline
Beamwidth $ \theta $ & $ \dfrac{2 \pi}{\sqrt{N}} $ \\
\hline
Main-lobe gain  & $ N $ \\
\hline
Side-lobe gain  & $ \dfrac{1}{\sin^2(3 \pi/2\sqrt{N})} $ \\
\hline
\end{tabular}
\end{table}

\subsection{Average Achievable Secrecy Rate}
In this subsection, we consider the uniform planar array (UPA) with the antenna pattern shown in Table~\ref{tab:gain}. The transmitting nodes and their receivers are equipped with $N$ antennas each, and each eavesdropper is equipped with $N_e$ antennas.

\begin{figure}\label{fig1}
\centering
\includegraphics[width=3.6in, height=3 in]{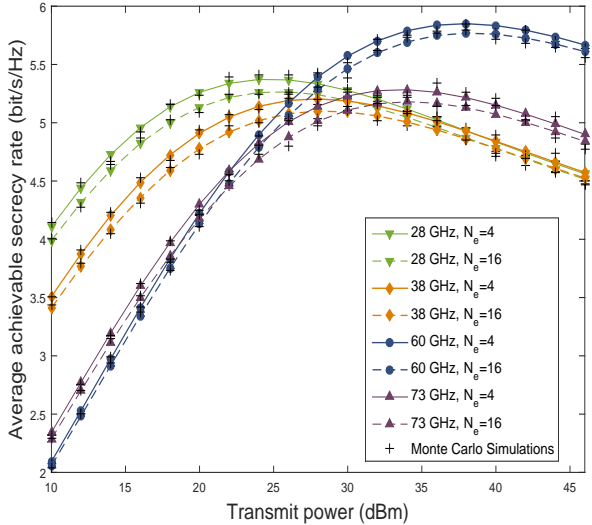}
\caption{ Effects of transmit power on the average achievable secrecy rate at 28 GHz, 38 GHz, 60 GHz and 73 GHz: $ \lambda=50/{\rm{km}}^2$, $ \lambda_e=100/{\rm{km}}^2$,  $N=16$, and  $r=15$ m.}
\end{figure}
{{Fig. 1 shows the effects of transmit power on the average achievable secrecy rate. We utilize four commonly-considered mmWave carrier frequencies, namely 28 GHz, 38 GHz, 60 GHz and 73 GHz, which have different $\beta$ values given by $\beta={(\frac{{\text{c}}}{{4\pi {f_c}}})^2}$ in Section II and path loss exponents in Table I. The analytical curves are obtained from  \eqref{average_Secrecy_rate}, which are validated by the Monte Carlo simulations marked by '$+$'.
We observe that there exist optimal transmit power values for maximizing average achievable secrecy rate at all the commonly-considered mmWave frequencies. In the low transmit power regime, better secrecy performance is achieved at 28 GHz, and higher average achievable secrecy rate can be obtained in the higher mmWave frequency band (60 GHz and 73 GHz) as the transmit power becomes large. The reason is that in the low transmit power regime, mmWave ad hoc network tends to be noise-limited, and mmWave link at lower mmWave frequencies experiences lower path loss and has stronger signal strength, which results in better { performance. However,} in the high transmit power regime, mmWave ad hoc network becomes  interference-limited. In this case, the interference received by a legitimate node becomes lower and the signal strength of the eavesdropper is also reduced at higher mmWave frequencies, due to the higher path loss at higher mmWave frequencies. In addition, it is shown that the secrecy performance at 60 GHz is better than that at 73 GHz when the transmit power is large enough, due to the fact that mmWave link at 60 GHz has higher LoS path loss exponent than that at 73 GHz~\cite{deng201528,rappaport201238} (2.25 at 60 GHz and 2 at 73 GHz in this figure based on the practical channel measurements in~\cite{deng201528,rappaport201238}), which leads to less interference received by a legitimate node and lower signal strength of the eavesdropper at 60 GHz.}}
Additionally, using the antenna pattern in Table~\ref{tab:gain}, average achievable secrecy rate is a bit lower at $N_e=16$ than that at $N_e=4$, due to fact that more effective antenna gain obtained by eavesdroppers using UPA with $N_e=16$, which deteriorates the secrecy performance.

\begin{figure}\label{fig:density_base_densityE}
\centering
\includegraphics[width=3.6in, height=3 in]{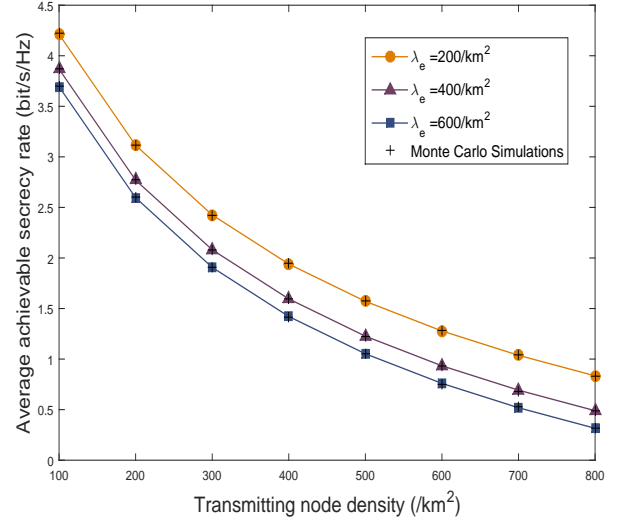}
\caption{Effects of transmitting node density on the average achievable secrecy rate at 60 GHz: $N=16$, $N_e=16$, $r=15$ m, and $P_t =30$ dBm.}
\end{figure}

Fig. 2 shows the effects of transmitting node density on the average achievable secrecy rate at 60 GHz. We see that when increasing the
transmitting node density, the average achievable secrecy rate declines. The reason is that when the transmitting nodes are
dense, mmWave ad hoc networks becomes interference-limited, and the interference caused by other transmitting nodes
dominate the performance. It is confirmed that in the large-scale mmWave ad hoc networks, more eavesdroppers
have a detrimental effect on the secrecy.
\begin{figure}\label{fig:N_base_Asymptotic}
\centering
\includegraphics[width=3.6in, height=3 in]{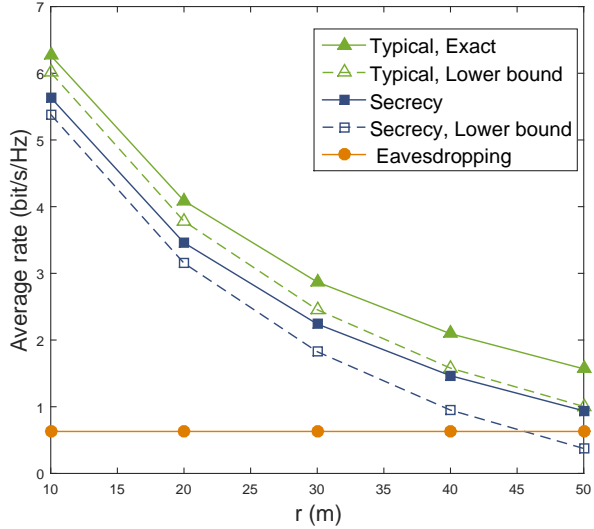}
\caption{Effects of transmit power with different typical distances on the average rate at 28 GHz: $P_t =10$ dBm, $\lambda=10/{\rm{km}}^2$, $\lambda_e=100/{\rm{km}}^2$, $N=16$,  and $N_e=16$.}
\end{figure}

Fig. 3 shows the effects of different typical distances on the average rate at 60 GHz. The green solid and dashed curves with triangles obtained from \eqref{R_average_rate} and \eqref{LB_average_rate_Alice} represent the exact and lower-bound average rate  between the typical transmitting node and its intended receiver, respectively,  and the orange solid curve with circles obtained from \eqref{Eve_average_rate} represents the average rate in the most malicious eavesdropping channel. We observe that the lower bound curves can efficiently predict the performance
behavior. It is shown that when the communication distance grows large, there is a significant decrease in the average achievable secrecy rate, due to the fact that the average rate between the typical transmitting node and its receiver decreases while the average rate in the most malicious eavesdropper's channel is unaltered. This illustrates that the secrecy rate in mmWave ad hoc networks is highly dependent on the communication distance between the transmitting node and its receiver.
\subsection{average achievable secrecy rate with ULA}
In this subsection, we consider the ULA configuration, and choose the antenna spacing as $ \bigtriangleup \tau= \frac{1}{2}\omega $. The results in Figs. 4 and 5 are obtained from \eqref{ULA_secrecy_rate}.

\begin{figure}\label{Fig:lowbound-ula}
\centering
\includegraphics[width=3.6in, height=3 in]{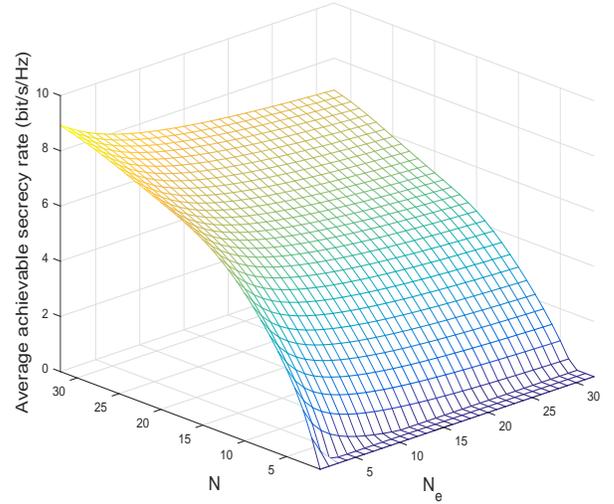}
\caption{ Effects of different antenna numbers on the average achievable secrecy rate at 38 GHz: $\lambda=50/{\rm{km}}^2$, $\lambda_e=100/{\rm{km}}^2$, $r=20$ m, $P_t =10 $ dBm, $ \xi_{r_o}=\pi/3 $, $ \varphi_{t_o}=\pi/3 $.}
\end{figure}

Fig. 4 shows the average achievable secrecy rate with different number of antennas at the transmitting nodes and eavesdroppers. It is observed that the average achievable secrecy rate increases with the number of antennas at the transmitting nodes, and decreases when eavesdroppers are equipped with more antennas. Moreover, the average achievable secrecy rate becomes very small when the transmitting node only has a couple of antennas. The reason is that the information signal beam is not narrow and more eavesdroppers can receive strong signals when they have more receive antennas.

\begin{figure}\label{Fig:lowbound-ula}
\centering
\includegraphics[width=3.6in, height=3 in]{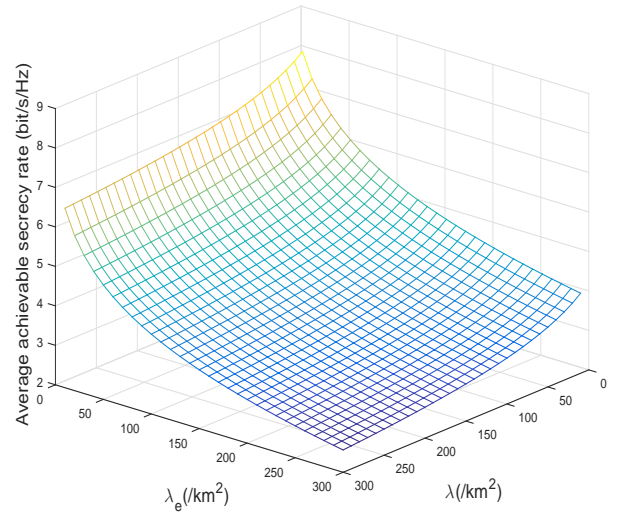}
\caption{ Effects of different node densities on the average achievable secrecy rate at 38 GHz: $N=16$, $N_e=4$, $r=20$ m, $P_t =10 $ dBm, $ \xi_{r_o}=\pi/3 $, $ \varphi_{t_o}=\pi/3 $.}
\end{figure}

Fig. 5 shows the achievable average achievable secrecy rate for different node densities.  We see that more eavesdroppers located in the networks are indeed harmful for secrecy. However, when the density of transmitting nodes increases, the secrecy performance also degrades, which indicates that interference can still be a concern for super dense transmitting nodes without highly directional antennas.

\subsection{average achievable secrecy rate with Artificial Noise}
In this subsection,  we examine the effects of artificial noise (AN) on the secrecy performance.
\begin{figure}\label{N_base_oneAN_Compare}
\centering
\includegraphics[width=3.6in, height=3 in]{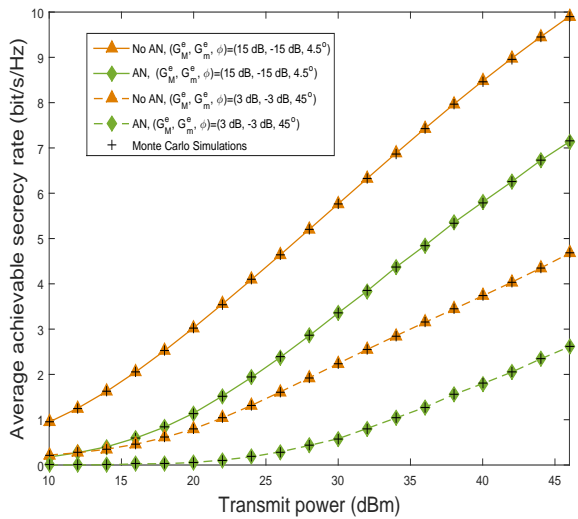}
\caption{ Effects of transmit power with/without AN on the average achievable secrecy rate at 60 GHz: $\lambda=20/{\rm{km}}^2$,  $\lambda_e=300/{\rm{km}}^2$,  $r=50$ m, and $\mu=0.85$.}
\end{figure}

Fig. 6 shows the effects of transmit power with/without AN at 60 GHz. We consider that the antenna beam patterns of sending information signal and AN at the transmitting node are $(G_\mathrm{M}^S, G_\mathrm{m}^S, \vartheta)=(3~\mathrm{dB}, -3~\mathrm{dB}, 45^{o})$ and $(G_\mathrm{M}^A, G_\mathrm{m}^A, \varsigma)=(3~\mathrm{dB}, -3~\mathrm{dB}, 45^{o})$, respectively, and the antenna beam pattern of only sending information signal without AN at the transmitting node is $(G_\mathrm{M}, G_\mathrm{m}, \theta)=(10~\mathrm{dB}, -10~\mathrm{dB}, 15^{o})$, as seen in~\cite{Andrew_Thornburg_2014}. The analytical curves without/with AN are obtained from \eqref{average_Secrecy_rate} and \eqref{SC_AN}, respectively. We see that when the transmitting nodes are not dense  ($\lambda=20/{\rm{km}}^2$ in this figure), the average achievable secrecy rate increases with the transmit power. {In this case, the use of AN with power allocation factor $\mu=0.85$ may not be able to improve secrecy\footnote{Note that the optimal power allocation for AN aided transmission is infeasible in the passive eavesdropping scenario, where the CSI of the eavesdropping channels cannot be obtained by the transmitter or legitimate receiver.}, and more power should be allocated to the information signal, to combat the severe interference and mmWave pathloss.  Such phenomenon has also been mentioned in the prior work ~\cite{Deng_TCOM_May_2016} with lower frequencies (See Fig. 7 in~\cite{Deng_TCOM_May_2016}), which is different from the results in the non large-scale physical layer security model.} Moreover, it is indicated that eavesdroppers using wide beam pattern can intercept more information.

\begin{figure}\label{Fig:SINR_LOW_BOUND}
\centering
\includegraphics[width=3.6in, height=3 in]{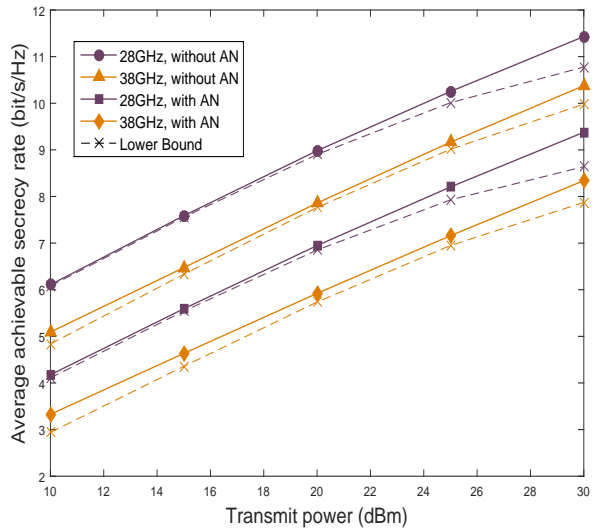}
\caption{ Effects of transmit power with AN on the average achievable secrecy rate at 28 and 38 GHz: $\lambda=30/{\rm{km}}^2$, $\lambda_e=500/{\rm{km}}^2$, $r =20$ m, $\mu=0.85$, $ (G_\mathrm{M}, G_\mathrm{m}, \theta)=(15~\mathrm{dB}, -15~\mathrm{dB}, 4.5^\circ)$,
 $ (G_\mathrm{M}^S, G_\mathrm{m}^S, \theta)=(10~\mathrm{dB}, -10~\mathrm{dB}, 15^\circ)$,
  $ (G_\mathrm{M}^A, G_\mathrm{m}^A, \theta)=(3~\mathrm{dB}, -3~\mathrm{dB}, 45^\circ)$,
     $ (G_\mathrm{M}^e, G_\mathrm{m}^e, \phi)=(3~\mathrm{dB}, -3~\mathrm{dB}, 45^\circ)$.}
\end{figure}

\begin{figure}\label{Fig:lowbound-ula}
\centering
\includegraphics[width=3.6in, height=3 in]{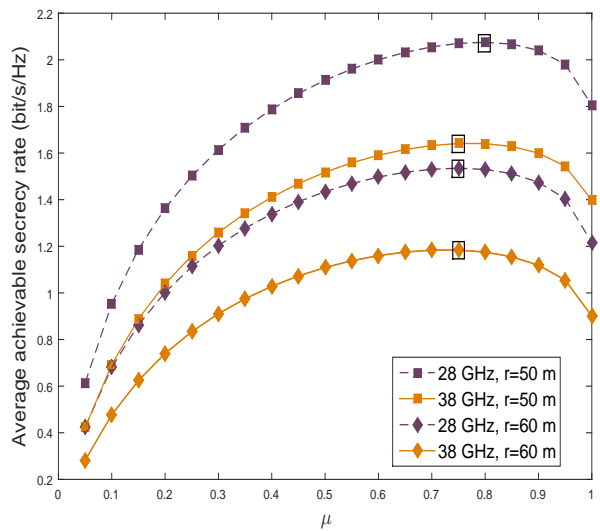}
\caption{ Effects of transmit power allocation factor on the average achievable secrecy rate at 28 and 38 GHz:  $\lambda=50/{\rm{km}}^2$, $\lambda_e=500/{\rm{km}}^2$, $P_t =30 $ dBm,
 $ (G_\text{M}, G_\text m, \theta)=(10~\mathrm{dB}, -10~\mathrm{dB}, 15^\circ)$,
   $ (G_\text M^S, G_\text m^S, \vartheta)=(3~\mathrm{dB}, -3~\mathrm{dB}, 45^\circ)$,
    $ (G_\text M^A, G_\text m^A, \varsigma)=(3~\mathrm{dB}, -3~\mathrm{dB}, 45^\circ)$.}
\end{figure}

Fig. 7 shows the effects of transmit power with/without AN in different frequency bands, i.e., 28 GHz and 38 GHz. The lower-bound results with/without AN  are obtained by using \eqref{LB_average_rate_Alice_AN} and \eqref{LB_average_rate_Alice} to calculate the average rate between the transmitting node and its receiver, respectively. We see that the lower bound results can well approximate the exact ones when the transmit power is not large ($<30$ dBm in this figure). The average achievable secrecy rate at 28 GHz is larger than that at 38 GHz, which indicates that the use of lower frequency bands could achieve better secrecy performance. The average achievable secrecy rate increases with transmit power, and the use of AN cannot improve the secrecy. The reason is that in this circumstance, more power should be used to enhance the transmission rate between the transmitting node and its receiver.

Fig. 8 shows the effects of transmit power allocation factor on the average achievable secrecy rate. We see that there exists an optimal $\mu$ to maximize the average achievable secrecy rate, which  reveals that AN can help enhance secrecy when the power allocation between the information signal and AN is properly set. Again, we see that larger communication distance $r$ deteriorates the secrecy performance. In addition, for a given $r$, secrecy transmission at 28 GHz is better than that at 38 GHz.




\section{Conclusion}
We concentrated on the secure communication in mmWave ad hoc networks by using physical layer security. We derived the average achievable secrecy rate without/with artificial noise. A tractable approach was developed to evaluate the average achievable secrecy rate when nodes are equipped with ULA. The results have highlighted the impacts of different mmWave frequencies, transmit power, node density and antenna gains on the secrecy performance. Important insights have been provided into the interplay between transmit power and mmWave frequency. When the node density is dense, the interference from nearby nodes dominates the secrecy performance. It is shown that power allocation between the information signal and AN  needs to  be carefully determined for secrecy performance enhancement.

{In this paper, we assume that the distance $r$ between the typical transmitting node and its receiver is constant. In the future work, we highlight that it is important to study the case of the dynamic $r$ following a certain distribution to model the specified scenarios. In addition, new antenna pattern models are needed to well characterize the effective antenna gain for a random interferer seen by the typical receiver when the number of mmWave antennas grows large.}

\section*{Appendix A: A detailed derivation of Theorem 1}
\label{App:theo_1}
\renewcommand{\theequation}{A.\arabic{equation}}
\setcounter{equation}{0}
Using \cite[Lemma 1]{Hamdi_2008}, the average rate ${\overline{R}}$ is calculated as
{\begin{align}\label{proof_1}
{\overline{R}}=&\mathbb{E} \left[  {{{\log }_2}\left( {1{\text{  +  }}{\gamma_0}} \right)} \right]
  {\text{ = }}\mathbb{E} \left[\frac{1}{\ln2}\int_0^\infty  {\frac{1}{{z}}(1 - {e^{ - z\gamma_o }}){e^{ - z}}dz}\right]\nonumber\\
  =&\frac{1}{\ln2}\mathbb{E}\left[ {\int_0^\infty  {\frac{1}{{z}}(1 - e^{- zY}){e^{ -z(\mathcal{I}+\sigma_0^2)}}dz} } \right] \nonumber\\
\mathop { = }\limits^{\left( a \right)}& \frac{1}{\ln2}\int_0^\infty  {\frac{1}{{z}}(1 -\underbrace{\mathbb{E}\left[e^{- zY}\right]}_{\Xi_1(z)})\underbrace{\mathbb{E}\left[{e^{ -z\mathcal{I}}}\right]}_{\Xi_2(z)} {e^{ -z\sigma_0^2}}dz},
\end{align}
where step (a) is obtained based on the fact that $Y$ and $\mathcal{I}$ are independent in the ad hoc networks,} $Y={P_t}G_{\text{M}}^2 L\left(r\right)$ is dependent on the LoS or NLoS condition  given a distance $r$, and the interference $\mathcal{I}$ is
\begin{align}\label{Int_exp}
\mathcal{I}= \sum\nolimits_{i \in \Phi/o} {{P_t}{G_i}L\left(\left|X_i\right|\right) }.
\end{align}
Based on the  {law} of total expectation, we can directly obtain $\Xi_1(z)$ as \eqref{Xi_z_x}.
Then, we see that $\Xi_2(z)$ is the Laplace transform of $\mathcal{I}$. To solve it, using the thinning theorem~\cite{Haenggi2009},  the mmWave  transmitting nodes are divided into two independent PPPs, namely LoS point process $\Phi_{\mathrm{LoS}}$ with density function $\lambda f_{\mathrm{Pr}}(R)$, and NLoS point process $\Phi_{\mathrm{NLoS}}$ with density function $\lambda(1-f_{\mathrm{Pr}}(R))$. Accordingly, by using the Slivnyak's theorem~\cite{Haenggi2009},  $\Xi_2(z)$ is given by
\begin{align}\label{I_laplace_exp}
\Xi_2(z)&=\mathbb{E}\left[e^{ -z\mathcal{I}}\right]=\mathbb{E}\left[{e^{ -z(\mathcal{I}_{\mathrm{LoS}}+\mathcal{I}_{\mathrm{NLoS}})}}\right] \nonumber\\
&=\mathbb{E}\left[{e^{ -z\mathcal{I}_{\mathrm{LoS}}}}\right] \mathbb{E}\left[{e^{ -z\mathcal{I}_{\mathrm{NLoS}}}}\right]
\end{align}
with
\begin{equation}\label{I_LoS_NLoS}
\left\{\begin{aligned}
\mathcal{I}_{\mathrm{LoS}} &= \sum\nolimits_{i \in \Phi_{\mathrm{LoS}}} {{P_t}{G_i}L\left(\left|X_i\right|\right)},\\
\mathcal{I}_{\mathrm{NLoS}} &=\sum\nolimits_{i \in \Phi_{\mathrm{NLoS}}} {{P_t}{G_i}L\left(\left|X_i\right|\right)}.
\end{aligned}\right.
\end{equation}
By applying the Laplace functional of   {the PPP~\cite{Haenggi2009},}
\begin{align}\label{I_LoS_Laplace_1}
&\mathbb{E}\left[{e^{ -z\mathcal{I}_{\mathrm{LoS}}}}\right]=\exp\Big( - 2\pi \lambda  \times  \nonumber\\
& \hspace{-0.4 cm} \int_0^\infty  {f_{{\rm{Pr}}}}\left( u \right)\Big( 1 -\underbrace{ \mathbb{E}\left[e^{ - z{P_t}{G_i}\beta \left( {\max {{\{ u,d\} }}} \right)^{ - \alpha _{\rm{LoS}} }}\right]}_{\Omega_1} \Big) udu \Big).
\end{align}
Based on the array gain distribution in \eqref{array_gain_pattern} and the law of total expectation, ${\Omega_1}$ is obtained as
\begin{align}\label{Omega1_array_gain}
\hspace{-0.4 cm}\Omega_1(z,u)=\sum\limits_{\ell, k  \in \left\{ {{\text{M}},{\text{m}}} \right\}} {\Pr }_{\ell k } \times
{{{e}}^{ - z{P_t}{G_\ell G_k}\beta {\left( {\max {{\{ u,d\} }}} \right)^{ - {\alpha_{{\text{LoS}}}}}}}}.
\end{align}
Likewise, we can derive $\mathbb{E}\left[{e^{ -z\mathcal{I}_{\mathrm{NLoS}}}}\right]$. Then, we get $\Xi_2(z) $ in \eqref{Xi_z_1}. Based on \eqref{proof_1} and \eqref{Xi_z_1}, we attain the desired result in \eqref{R_average_rate} and complete the proof.

\section*{Appendix B: A detailed derivation of Eq. \eqref{LB_average_rate_Alice}}
\label{App:theo_1}
\renewcommand{\theequation}{B.\arabic{equation}}
\setcounter{equation}{0}
{The average rate between the typical transmitting node and its intended receiver can be tightly lower bounded as~\cite{Lifeng_massiveMIMO}}
\begin{align}\label{proof1}
{{\bar{R}}_1^{\mathrm{L}}}& = {{\log }_2}\left( 1 + e^{\mathbb{E}\left[{\ln \gamma_o}\right]} \right),
\end{align}
where $\mathbb{E}\left[\ln\gamma _o\right]$ is calculated as
\begin{align}\label{E_ln_gamma}
\mathbb{E}\left[\ln\gamma _o\right]=&\underbrace{\mathbb{E}\left[\ln\left({{P_t}G_\mathrm{M}^2\beta {r^{ - {\alpha_o}}}}\right)\right]}_{Z_1}+\nonumber\\
&\underbrace{\mathbb{E}\left[\ln\left(\frac{1}{\sum\nolimits_{i \in \Phi /{o} } {{P_t}{G_i}\beta {{\left| {{X_{i,o}}} \right|}^{ - {\alpha _i}}}}  + {N_o}}\right)\right]}_{Z_2}.
\end{align}
Since the typical link can be either LoS or NLoS, using the law of total probability, $Z_1$ is calculated as
\begin{align}\label{Z_1_eq}
Z_1=\ln\left({P_t}G_\mathrm{M}^2\beta\right)-\left(f_\mathrm{Pr}\left(r\right) \alpha_{\mathrm{LoS}}+\left(1-f_\mathrm{Pr}\left(r\right)\right)\alpha_{\mathrm{NLoS}}\right) \ln r,
\end{align}
where $\alpha_{\mathrm{LoS}}$ and $\alpha_{\mathrm{NLoS}}$ are the path loss exponents of the LoS and the NLoS, respectively.

Considering the convexity of $\ln\left(\frac{1}{1+x}\right)$ and using Jensen's inequality, we derive the lower bound on the $Z_2$ as
\begin{align}\label{Z_2_step}
Z_2^{\rm{L}}=\ln\bigg(\frac{1}{\underbrace{\mathbb{E}\Big[\sum\nolimits_{i \in \Phi /{o} } {{P_t}{G_i}\beta {{\left| {{X_{i,o}}} \right|}^{ - {\alpha _i}}}} \Big]}_{\overline{\Lambda} }+N_o}\bigg).
\end{align}
Using a similar approach in \eqref{I_laplace_exp}, $\Lambda $ is derived as
\begin{align}\label{Int_step}
\begin{gathered}
  \overline{\Lambda}   =\mathbb{E}\Big[ {\sum\nolimits_{i \in {\Phi _{{\text{LoS}}}}} {{P_t}{G_i}\beta \left( {\max {{\{ \left| {{X_{i,o}}} \right|,d\} }^{ - {\alpha _{{\text{LoS}}}}}}} \right)} } \Big]\; \hfill \\
   + {\mathbb{E}}\Big[ {\sum\nolimits_{i \in {\Phi _{{\text{NLoS}}}}} {{P_t}{G_i}\beta \left( {\max {{\{ \left| {{X_{i,o}}} \right|,d\} }^{ - {\alpha _{{\text{NLoS}}}}}}} \right)} } \Big] \hfill \\
  \mathop { = }\limits^{\left( b \right)} {P_t}\bar G\beta 2\pi \lambda  \times \Big( \int_{\text{0}}^d \big({({d^{ - {\alpha _{{\text{LoS}}}}}} - {d^{ - {\alpha _{{\text{NLoS}}}}}})} r{f_{{\text{Pr}}}}\left( r \right) + {d^{ - {\alpha _{{\text{NLoS}}}}}}r \big) dr \hfill \\
  +  \int_d^\infty  \big({({r^{1 - {\alpha _{{\text{LoS}}}}}} - {r^{1 - {\alpha _{{\text{NLoS}}}}}})}
  {f_{{\text{Pr}}}}\left( r \right) + {r^{1 - {\alpha _{{\text{NLoS}}}}}} \big) dr\Big) ,
\end{gathered}
\end{align}
where $\bar{G}$ is the average array gain. Here, step (b) results from using Campbell's theorem~\cite{Baccelli2009}. Based on \eqref{array_gain_pattern} and using the law of total expectation, $\bar{G}$ is calculated as
\begin{align}\label{G_array_gain}
{{\bar G} } = {\mathbb{E}}\left\{ {{G_i  }} \right\}
= \sum\nolimits_{\ell ,k \in \left\{ {{\text{M}},{\text{m}}} \right\}} {{G_{\ell k}}\text{Pr}_{\ell  k}}.
\end{align}
Substituting \eqref{Z_1_eq}, \eqref{Z_2_step} and \eqref{Int_step} into \eqref{E_ln_gamma}, we obtain $\mathbb{E}\left\{\ln\gamma _o\right\}$ in  \eqref{proof1}, and the desired result \eqref{LB_average_rate_Alice}.

\section*{Appendix C: A detailed derivation of Theorem 2}
\label{App:theo_1}
\renewcommand{\theequation}{C.\arabic{equation}}
\setcounter{equation}{0}
The average rate ${\overline{R}}_{e^{*}}$ is calculated as
\begin{align}\label{theo2_11}
{\overline{R}}_{e^{*}}&=\mathbb{E} \left[ {{\log }_2}\left( {1 + {\gamma_{e^{*}}}} \right) \right]\nonumber\\
&=\frac{1}{{\ln 2}}\int_0^\infty  {\frac{{\left( {1 - {F_{{\gamma _{{e^*}}}}}\left( x \right)} \right)}}{{1 + x}}dx},
\end{align}
where ${F_{\gamma _{{e^*}}}}\left( \cdot\right)$ is the cumulative distribution function (CDF) of $\gamma _{{e^*}}$. By using the thinning theorem~\cite{Baccelli2009}, the eavesdroppers are divided into the  LoS point process $\Phi_e^{\mathrm{LoS}}$ with density function $\lambda_e f_{\mathrm{Pr}}(R)$, and NLoS point process $\Phi_e^{\mathrm{NLoS}}$ with density function $\lambda_e(1-f_{\mathrm{Pr}}(R))$. Then,  ${F_{\gamma _{{e^*}}}}\left( \cdot\right)$ is given by
\begin{align}\label{F_gamma}
{F_{{\gamma _{{e^*}}}}}\left( {x} \right)& = \Pr \left( {{\gamma _{{e^*}}} < {x}} \right) \nonumber\\
&=\Pr \left( \max\left\{\gamma _{{e^*}}^{\mathrm{LoS}},\gamma _{{e^*}}^{\mathrm{NLoS}}\right\} < {x}\right)\nonumber\\
&=\underbrace{\Pr \left( \gamma _{{e^*}}^{\mathrm{LoS}} < {x} \right)}_{\mathcal{P}_1\left( {x} \right)} \underbrace{\Pr \left( \gamma _{{e^*}}^{\mathrm{NLoS}} < {x} \right)}_{\mathcal{P}_2\left( {x} \right)},
\end{align}
where
\begin{equation}
\left\{\begin{aligned}
\gamma _{{e^*}}^{\mathrm{LoS}}&= \mathop {\max }\limits_{e \in {\Phi _e^{\mathrm{LoS}} }} \left\{ {\frac{{{P_t}{G_e}L\left(\left|X_e\right|\right) }}{{{\sigma_e^2}}}} \right\},\\
\gamma _{{e^*}}^{\mathrm{NLoS}}&= \mathop {\max }\limits_{e \in {\Phi _e^{\mathrm{NLoS}} }} \left\{ {\frac{{{P_t}{G_e}L\left(\left|X_e\right|\right) }}{{{\sigma_e^2}}}} \right\}.
\end{aligned}\right.
\end{equation}
We first derive $\mathcal{P}_1\left( {x} \right)$ as
\begin{align}\label{theo_eve_123}
& \mathcal{P}_1\left( {x} \right)=\Pr \left( \gamma _{{e^*}}^{\mathrm{LoS}} < {x} \right) \nonumber\\
&={\mathbb{ E}}\left[ {\prod \limits_{e \in \Phi _e^{\mathrm{LoS}}} {\Pr \left( {\frac{{{P_t}{G_e}\beta \left( {\max {{\{ {r_e},d\} }}} \right)^{ - {\alpha_\mathrm{LoS}}}}}{{{\sigma_e^2}}}} < {x} \right)} } \right] \nonumber\\
&\mathop {\rm{ = }}\limits^{\left( c\right)} \exp\Big\{-2\pi \lambda_e \times \nonumber\\
&\int_0^\infty \underbrace{\Pr \big( {\frac{{{P_t}{G_e}\beta \left( {\max {{\{ {r_e},d\} }}} \right)^{ - {\alpha_\mathrm{LoS}}}}}{{{\sigma_e^2}}}} >{x} \big)}_{\Theta}  f_{\mathrm{Pr}}(r_e) r_e d{r_e} \Big\},
\end{align}
where step (c) is obtained by using the Laplace functional. Based on the law of total probability, $\Theta$ is calculated as
\begin{align}\label{pro_law_total}
\Theta=\sum\limits_{
  \ell,n  \in \left\{ {{\text{M}},{\text{m}}} \right\} }  {\rm{\mathbf{1}}}\left(  {\max \{ {r_e},d\}} < \big(\frac{{P_t}{G_\ell }{G_n^e}\beta}{x \sigma_e^2}\big)^{\frac{1}{{\alpha _{\mathrm{LoS}}}}}\right) {{{\Pr }_{\ell n}}},
\end{align}
Substituting \eqref{pro_law_total} into \eqref{theo_eve_123}, we get $\mathcal{P}_1\left( {x} \right)$ in \eqref{PP_1}. Then, $\mathcal{P}_2\left( {x} \right)$ is similarly derived as \eqref{PP_2}.

\bibliographystyle{IEEEtran}


\end{document}